\documentclass[runningheads]{llncs}
\usepackage[T1]{fontenc}
\usepackage{amsmath}
\usepackage{amssymb}
\usepackage{comment}
\usepackage{lmodern}
\usepackage{url}
\usepackage{graphicx}
\usepackage[linesnumbered]{algorithm2e}
\usepackage{todonotes}
\usepackage{mathtools}
\usepackage{arrayjob}
\usepackage{array}
\newcolumntype{M}[1]{>{\centering\arraybackslash}m{#1}}
\usepackage{xcolor,colortbl}
\usepackage{hyperref}

\usepackage[capitalize, nameinlink]{cleveref}

\newcommand{\wn}[2][inline]{\todo[color=green!50,#1]{{\textbf{W:} #2}}}

\newcommand{\jc}[2][inline]{\todo[color=brown!50,#1]{{\textbf{J:} #2}}}

\newcommand{\Oh}{\mathcal{O}}
\renewcommand{\leq}{\leqslant}
\renewcommand{\geq}{\geqslant}

\renewcommand{\ge}{\geqslant}

\title{Dynamic data structures for twin-ordered matrices}

 \author{
     Bartłomiej Bosek\inst{1,2}\thanks{\includegraphics[height=1.2em]{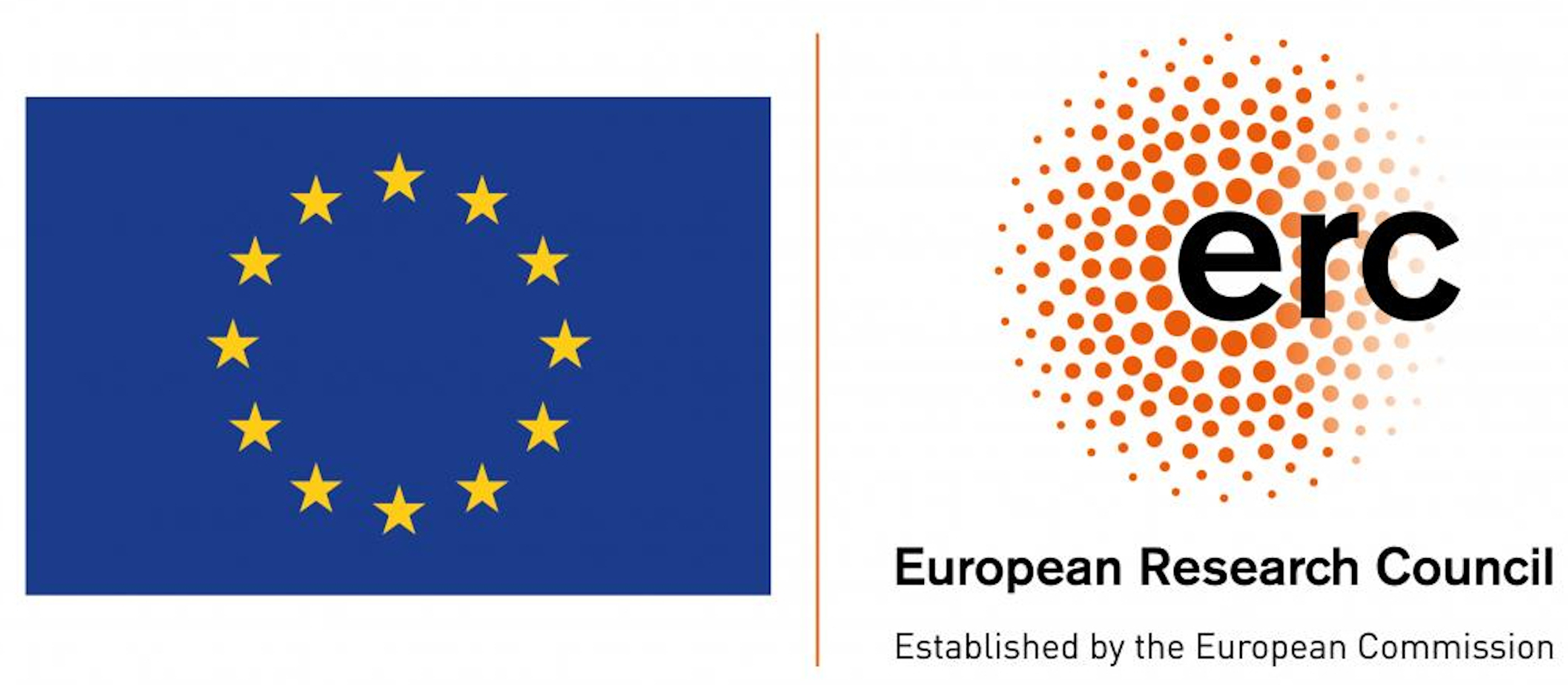}
     {\em{B. Bosek, W. Nadara, Mi. Pilipczuk, A. Zych-Pawlewicz:}}
     This work is a~part of project BOBR that has received funding from the European Research Council (ERC) under the European Union’s Horizon 2020 research and innovation programme (grant agreement No. 948057).} \and
     Jadwiga Czyżewska\inst{2}\thanks{Supported by Polish National Science Centre SONATA BIS-12 grant number 2022/46/E/ST6/00143.} \and
     Evangelos Kipouridis\inst{3} \and
     Wojciech Nadara\inst{2}\inst{\star} \and
     Michał Pilipczuk\inst{2}\inst{\star} \and
     Karol Węgrzycki\inst{3}\thanks{Supported by 
    the Deutsche Forschungsgemeinschaft (DFG, German Research Foundation) grant number
    559177164.} \and
     Anna Zych-Pawlewicz\inst{2}\inst{\star}
 }

 \authorrunning{B. Bosek, J. Czyżewska, E. Kipouridis, W. Nadara, M. Pilipczuk, K. Węgrzycki and A. Zych-Pawlewicz}

 \institute{
     Institute of Theoretical Computer Science, Jagiellonian University, Kraków, Poland\\
     \email{bartlomiej.bosek@uj.edu.pl} \and
     Institute of Informatics, University of Warsaw, Poland\\
     \email{\{j.czyzewska, w.nadara, michal.pilipczuk, anka\}@mimuw.edu.pl} \and
     Max Planck Institute for Informatics, Saarland Informatics Campus, Germany\\
     \email{\{kipouridis, kwegrzyc\}@mpi-inf.mpg.de}
 }

\begin{document}
\maketitle

\begin{abstract}
We present a~dynamic data structure for representing binary $n\times n$ matrices
that are $d$-twin-ordered, for a~fixed parameter $d$. Our structure
supports cell queries and single-cell updates both in $\Oh(\log \log n)$ expected worst case
time, while using
$\Oh_d(n)$ memory; here,
the $\Oh_d(\cdot)$ notation hides a~constant depending on $d$.
\keywords{Dynamic data structures, twin-width, matrix}
\end{abstract}

\clearpage
\setcounter{page}{1}
\newpage

\section{Introduction}\label{sec:introduction}

The notion of \emph{twin-width}, introduced by Bonnet, Kim, Thomass\'e, and Watrigant~\cite{tww1}, has rapidly emerged as a unifying structural parameter for graphs (both sparse and dense), matrices, and permutations. The basic idea behind twin-width, expressed through the underlying decomposition notion called a {\em{contraction sequence}}, is to ``fold'' the given structure (be it a graph, a matrix, or a~permutation) by iteratively merging elements into larger and larger parts. At~each point, we require that every part has a homogeneous relation to all the other parts, except for a bounded number of exceptions. The best possible bound on the number of exceptions is then precisely the twin-width. This concept has a~surprisingly wide range of applicability: for example, $H$-minor-free graphs (for a fixed $H$), graphs of bounded cliquewidth, and pattern-free permutations all have bounded twin-width~\cite{tww1}.

Despite being a relatively new concept, the fundamental character of twin-width has already been well-understood. Twin-width uncovers deep links between structural graph theory and algorithmic finite model theory, see e.g.~\allowbreak\cite{tww-separable,tww8,tww4,tww1,tww-perm,tww-types,tww-stable,tww-tournaments,tww-distal}. In this direction, probably the most important result is the theorem of Bonnet et al.~\cite{tww1} stating that the model-checking problem for first-order logic is fixed-parameter tractable on graph classes of bounded twin-width, provided a suitable contraction sequence is given as input. Combinatorial aspects of twin-width were explored in~\cite{tww2,tww7,tww4,tww6,tww-pchi,tww-qchi}, among many other works. Several further algorithmic applications can be found in~\cite{tww5,tww-kern,tww-repr}.



In this work, we study binary matrices of bounded twin-width -- let us now define the notion precisely. We consider a binary matrix $M$, i.e., one with entries in $\{0,1\}$. A contraction sequence for $M$ is a sequence of pairs 
$$(\mathcal{R}_0, \mathcal{C}_0),\ (\mathcal{R}_1, \mathcal{C}_1), \dots ,\ (\mathcal{R}_p, \mathcal{C}_p),$$
where for every $i=0,1,\dots, p$ we have $\mathcal{R}_i$ and $\mathcal{C}_i$ are partitions of rows and columns of $M$ into row blocks and column blocks respectively. Moreover,
\begin{itemize}
	\item in $(\mathcal{R}_0, \mathcal{C}_0)$ each row and each column belongs to a separate block;
	\item in $(\mathcal{R}_p, \mathcal{C}_p)$ all rows belong to a~single block and all columns belong to a single block;
	\item $(\mathcal{R}_i, \mathcal{C}_i)$ is obtained from  $(\mathcal{R}_{i-1}, \mathcal{C}_{i-1})$ by either merging two consecutive row blocks into a single row block or merging two consecutive column blocks into a single column block.
\end{itemize}
For an example of a contraction sequence, see Figure~\ref{fig:contraction_sequence}.
 For a row block $R\in \mathcal{R}$ and a column block $C\in \mathcal{C}$, the {\em{zone}} $M[R,C]$ is the submatrix consisting of all the cells at the intersection of the rows of $R$ and the columns of $C$. We say that a zone is {\em{non-constant}} if it contains both an entry $0$ and an entry $1$. The {\em{width}} of the contraction sequence is the least $d$ such that at every point in the sequence, in every row block and in every column block there are at most $d$ non-constant zones. We say that $M$ is {\em{$d$-twin-ordered}} if it admits a contraction sequence of width at most $d$, and $M$ has {\em{twin-width}} at most $d$ if one can permute rows and columns in order to obtain a $d$-twin-ordered matrix. Throughout this paper we consider the order of the rows and columns to be fixed, and we always assume that the considered matrix is~$d$-twin-ordered. 

\begin{figure}[h]
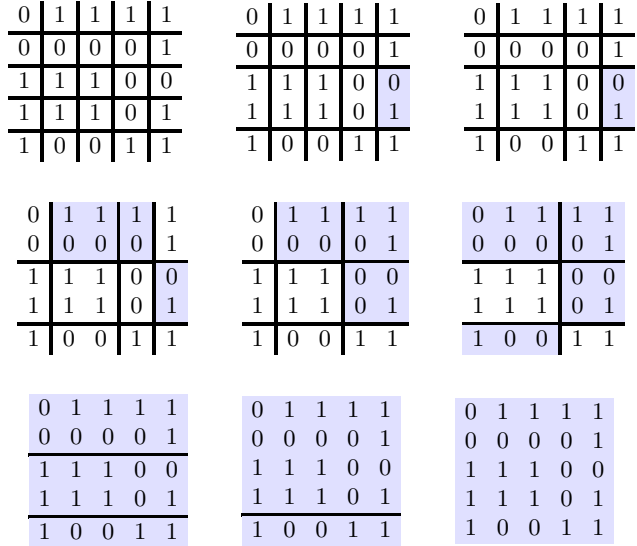

	\newcommand{\g}{\cellcolor{blue!12!white}}
	\setlength{\arrayrulewidth}{.15em}
	\centering

	\begin{tabular}{M{1em}|M{1em}|M{1em}|M{1em}|M{1em}}
		
			0 & 1 & 1 & 1 & 1 \\
		\hline
			0 & 0 & 0 & 0 & 1 \\
		\hline
			1 & 1 & 1 & 0 & 0 \\
		\hline
			1 & 1 & 1 & 0 & 1 \\
		\hline
			1 & 0 & 0 & 1 & 1 \\		
	\end{tabular}
	\hspace{0.5cm}
	\begin{tabular}{M{1em}|M{1em}|M{1em}|M{1em}|M{1em}}
		0 & 1 & 1 & 1 & 1 \\
		\hline
		0 & 0 & 0 & 0 & 1 \\
		\hline
		1 & 1 & 1 & 0 & \g 0 \\
		1 & 1 & 1 & 0 & \g 1 \\
		\hline
		1 & 0 & 0 & 1 & 1 \\		
	\end{tabular}
	\hspace{0.5cm}
	\begin{tabular}{M{1em}|M{1em} M{1em}|M{1em}|M{1em}}
		0 & 1 & 1 & 1 & 1 \\
		\hline
		0 & 0 & 0 & 0 & 1 \\
		\hline
		1 & 1 & 1 & 0 & \g 0 \\
		1 & 1 & 1 & 0 & \g 1 \\
		\hline
		1 & 0 & 0 & 1 & 1 \\		
	\end{tabular}
	\\
	\vspace{0.5cm}
	\begin{tabular}{M{1em}|M{1em} M{1em}|M{1em}|M{1em}}
		0 & \g 1 & \g 1 & \g 1 & 1 \\
		0 & \g 0 & \g 0 & \g 0 & 1 \\
		\hline
		1 & 1 & 1 & 0 & \g 0 \\
		1 & 1 & 1 & 0 & \g 1 \\
		\hline
		1 & 0 & 0 & 1 & 1 \\		
	\end{tabular}
	\hspace{0.5cm}
	\begin{tabular}{M{1em}|M{1em} M{1em}|M{1em} M{1em}}
		0 & \g 1 & \g 1 & \g 1 & \g 1 \\
		0 & \g 0 & \g 0 & \g 0 & \g 1 \\
		\hline
		1 & 1 & 1 & \g 0 & \g 0 \\
		1 & 1 & 1 & \g 0 & \g 1 \\
		\hline
		1 & 0 & 0 & 1 & 1 \\		
	\end{tabular}
	\hspace{0.5cm}
	\begin{tabular}{M{1em} M{1em} M{1em}|M{1em} M{1em}}
		\g 0 & \g 1 & \g 1 & \g 1 & \g 1 \\
		\g 0 & \g 0 & \g 0 & \g 0 & \g 1 \\
		\hline
		1 & 1 & 1 & \g 0 & \g 0 \\
		1 & 1 & 1 & \g 0 & \g 1 \\
		\hline
		\g 1 & \g 0 & \g 0 & 1 & 1 \\		
	\end{tabular}
	\\
	\vspace{0.5cm}
	\begin{tabular}{M{1em} M{1em} M{1em} M{1em} M{1em}}
		\g 0 & \g 1 & \g 1 & \g 1 & \g 1 \\
		\g 0 & \g 0 & \g 0 & \g 0 & \g 1 \\
		\hline
		\g 1 & \g  1 & \g 1 & \g 0 & \g 0 \\
		\g 1 & \g  1 & \g 1 & \g 0 & \g 1 \\
		\hline
		\g 1 & \g 0 & \g 0 & \g  1 & \g 1 \\		
	\end{tabular}
	\hspace{0.5cm}
	\begin{tabular}{M{1em} M{1em} M{1em} M{1em} M{1em}}
		\g 0 & \g 1 & \g 1 & \g 1 & \g 1 \\
		\g 0 & \g 0 & \g 0 & \g 0 & \g 1 \\
		\g 1 & \g  1 & \g 1 & \g 0 & \g 0 \\
		\g 1 & \g  1 & \g 1 & \g 0 & \g 1 \\
		\hline
		\g 1 & \g 0 & \g 0 & \g  1 & \g 1 \\	
	\end{tabular}
	\hspace{0.5cm}
	\begin{tabular}{M{1em} M{1em} M{1em} M{1em} M{1em}}
		\g 0 & \g 1 & \g 1 & \g 1 & \g 1 \\
		\g 0 & \g 0 & \g 0 & \g 0 & \g 1 \\
		\g 1 & \g  1 & \g 1 & \g 0 & \g 0 \\
		\g 1 & \g  1 & \g 1 & \g 0 & \g 1 \\
		\g 1 & \g 0 & \g 0 & \g  1 & \g 1 \\	
	\end{tabular}
	\let\g\undefined
	\caption{A contraction sequence of width $2$: if we merge two consecutive rows or columns, we erase a line separating them. For example, in the first step we merge the third and the fourth row, in the second we merge the second and the third column. 
		Non-constant zones are colored: note that at any point of the sequence any row block and column block contains at most two non-constant zones.}
	\label{fig:contraction_sequence}
\end{figure}

The connection between graphs and matrices of bounded twin-width lies at the heart of the theory of twin-width and was observed already in the original work of Bonnet et al.~\cite{tww1}. Later, Bonnet, Giocanti, Ossona de Mendez, Simon, Thomass\'e, and Toru\'nczyk studied the twin-width of matrices from the combinatorial and the model-theoretic viewpoints~\cite{tww4}, whereas Bonnet, Giocanti, Ossona de Mendez, and Thomass\'e investigated algorithmic applications of twin-width for problems such as matrix multiplication~\cite{tww5}. The matrix viewpoint was also pivotal in the work on $\chi$-boundedness of graphs of bounded twin-width~\cite{tww-pchi,tww-qchi}.

The starting point of our work is the result of Pilipczuk, Sokołowski, and Zych-Pawlewicz~\cite{tww-repr}, who approached the setting of matrices of bounded twin-width from the point of view of data structures, or {\em{compact representations}}. Precisely, they designed a data structure that stores a $d$-twin-ordered binary $n\times n$ matrix $M$ using $\Oh_d(n)$\footnote{The $\Oh_d(\cdot)$ notation hides multiplicative factors that may depend on $d$.} bits (which is asymptotically optimal) while supporting single-entry queries in time $\Oh_d(\log \log n)$. Their data structure, however, is fundamentally static: once it is initialized on a matrix $M$, it cannot be easily modified upon updates to $M$. It is natural to ask whether a similarly compact and efficient data structure can be also proposed in the dynamic setting, where $M$ can be updated subject to a guarantee that it always stays $d$-twin-ordered.


\paragraph*{Our results.} In this work, we address the above question and present a dynamic data structure for $d$-twin-ordered binary matrices that supports both queries and updates efficiently, while guaranteeing near-linear space complexity. Formally, we show the following theorem.

\begin{theorem}\label{thm:main}
	Let $d\in \mathbb{N}$ and $M \in \{0,1\}^{n \times n}$ be a binary matrix
	that is updated over time subject to a guarantee that $M$ is
	$d$-twin-ordered at every moment. Then, there exists a dynamic data
	structure for maintaining $M$ that uses $\Oh_d(n)$ memory and supports
	the following operations:
	\begin{itemize}
		\item $\mathtt{Init}(n, \cal K)$: initialize the data structure
		with a list $\cal K$ of disjoint rectangles representing all ones
		in $M$ in $\Oh_d((n + |{\mathcal{K}}|) \log \log n)$ time.
		\item $\mathtt{QueryBit}(i,j)$: return the
		entry $M[i, j]$ in $\Oh(\log \log n)$ expected worst case
		time.
		\item $\mathtt{Update}(i, j)$: flip the entry $M[i, j]$
		in $\Oh(\log \log n)$ expected worst case time.
	\end{itemize}
\end{theorem}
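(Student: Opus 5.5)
We plan to reduce the problem to a dynamic orthogonal point-location structure over a small family of rectangles. The starting fact we will use is the combinatorial property underlying the static representation of~\cite{tww-repr}: a $d$-twin-ordered $n\times n$ matrix can be partitioned into $\Oh_d(n)$ axis-parallel rectangles, each of which is constant (all zeros or all ones). We will re-derive it from the contraction sequence by charging. Every time two row blocks (or two column blocks) merge, only $\Oh(d)$ zones change status from constant to non-constant. Taking the maximal constant zones over the whole sequence then yields $\Oh_d(n)$ rectangles that tile $M$.

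The data structure will store the ones of $M$ as a set $\mathcal{K}$ of disjoint rectangles. It will answer $\mathtt{QueryBit}(i,j)$ by stabbing: decide whether some rectangle of $\mathcal{K}$ contains the point $(i,j)$.

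\textbf{Step 1 (query).} Since the rectangles are disjoint, stabbing reduces to a two-dimensional predecessor-type search. We sweep rows with a persistent-free decomposition: a van Emde Boas / $y$-fast trie over columns, combined with a segment-like structure over rows. To keep $\Oh(\log\log n)$ time, we use a hashing-based $y$-fast trie, which is the source of the ``expected'' bound. Concretely, we exploit disjointness and the $\Oh_d(n)$ size bound. We cut each rectangle into row-strips only at ``breakpoints'' and store, for every row interval, a predecessor structure on left endpoints. To obtain worst-case rather than amortized bounds, we plan to use the two-level scheme of~\cite{tww-repr}: group rows into chunks of size $\log n$ (or $d$-dependent), and make a predecessor query inside the chunk's structure. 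Disjointness guarantees that the predecessor rectangle by left endpoint in the appropriate row is the only candidate.

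\textbf{Step 2 (update).} Flipping $M[i,j]$ removes or adds a single cell. If $(i,j)$ lies in a one-rectangle $K$, we split $K$ into at most four rectangles around the cell. Otherwise we insert the $1\times 1$ rectangle. Each change costs $\Oh(1)$ predecessor updates, so it takes $\Oh(\log\log n)$ expected time. The danger is that $|\mathcal{K}|$ grows beyond $\Oh_d(n)$. To keep memory linear we will periodically rebuild $\mathcal{K}$ into a canonical $\Oh_d(n)$ partition, using the fact that $M$ is $d$-twin-ordered at every moment. We de-amortize this rebuild in the standard way. A background copy is built with $\mathtt{Init}$ over the next $\Theta(n)$ updates, spending $\Oh_d(\log\log n)$ extra work per update, and a log of pending flips is replayed onto it. Note that $\mathtt{Init}$ itself needs a way to compute the canonical partition from $\mathcal{K}$ in $\Oh_d((n+|\mathcal{K}|)\log\log n)$ time; we would do this by a row sweep that merges vertically identical row segments.

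\textbf{Main obstacle.} The main obstacle is the rebuild step. We need to compute, from an arbitrary rectangle list, an $\Oh_d(n)$-size partition without knowing a contraction sequence, and within the stated time bound. Our first attempt will be a greedy sweep that maintains maximal vertical runs of identical row-intervals. We will then prove, via the zone-charging argument above, that the number of runs is $\Oh_d(n)$ for any $d$-twin-ordered matrix. Between rebuilds the size stays $\Oh_d(n)+\Oh(n)$, so memory is $\Oh_d(n)$ throughout.
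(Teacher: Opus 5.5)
Your rebuild layer (periodic canonicalization, a background $\mathtt{Init}$, replaying a log of pending flips) is close to what the paper does. The core of your plan, however, has a genuine gap.

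Steps~1 and~2 need a \emph{dynamic} orthogonal point-location structure over disjoint rectangles. It would have to answer stabbing queries and absorb one rectangle deletion plus up to four insertions per flip, all in $\Oh(\log\log n)$ time and $\Oh_d(n)$ space. No dynamic point-location structure with such bounds is known, and your sketch does not supply one. Chan's structure~\cite{Chan2DPL} and the representation of~\cite{tww-repr} are static, and Chan obtains linear space only through persistent predecessor search.

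Without persistence, your scheme stores each rectangle once for every row interval between breakpoints (or every chunk of $\log n$ rows) that it crosses. This is $\Theta(n^2)$, respectively $\Theta(n^2/\log n)$, already at twin-width $\Oh(1)$. For example, take $n/4$ full-height columns of ones alternating with zero columns, placed next to a triangular block of ones; the triangular block puts a breakpoint at every row of your canonical partition. Nor does a flip cost $\Oh(1)$ predecessor operations. Cutting a tall rectangle around $(i,j)$ changes it in every interval or chunk it crosses. It also creates new breakpoints at row $i$, splitting an interval whose whole structure must then be duplicated.

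The missing idea is that the rectangle family never has to be edited in place. The paper keeps Chan's \emph{static} structure on a snapshot canonical decomposition, together with a hash table of the at most $8f_d(n+2)$ flips made since the snapshot. A query checks the table first (expected $\Oh(1)$) and only otherwise stabs the static structure; a flip is one such query plus one table write. When the table fills, the snapshot slabs are cut row by row around the recorded $0$-flips and $1\times 1$ slabs are added for the $1$-flips. The result is then canonicalized and the static structure rebuilt.

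This is de-amortized with two copies, one kept frozen so that the next rebuild reads a stable decomposition. This also resolves the fact that your background $\mathtt{Init}$ would read a $\mathcal{K}$ that keeps changing. Epochs have length $\Theta_d(n)$. Your $\Theta(n)$ would only give $\Oh_d(\log\log n)$ per update, not the $d$-independent bound claimed.

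A smaller gap is the size bound for your greedy sweep. Charging merges along a contraction sequence shows that \emph{some} $\Oh_d(n)$ tiling exists, but it says nothing about the runs your sweep produces. The paper instead shows two things. First, every canonical slab not touching the border meets a corner, i.e.\ a $2\times 2$ window of consecutive rows and columns whose rows differ and whose columns differ. Second, there are only $\Oh_d(n)$ corners. The second fact also follows from your tiling: every corner contains a corner cell of some slab of any slab decomposition $\mathcal{K}$, so there are at most $16|\mathcal{K}|$ corners.
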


Throughout the paper we work in the standard word RAM model with words
of length $\Oh(\log n)$, hence space complexity $\Oh_d(n)$ means storing
the matrix using \linebreak$\Oh_d(n \log n)$ bits. 
In the static setting,
Pilipczuk et al.~\cite{tww-repr} managed to obtain a~representation that
uses only $\Oh_d(n)$ bits, hence our result trades a slight increase in the space complexity for making the data structure dynamic.
Note that only the space complexity of the data structure and
its initialization time depend on the parameter $d$, whereas the
bounds on the time complexity of all the other operations do not
involve $d$.
Following~\cite{tww-repr} we allow our data structure to be
initialized on a compact description of the matrix through a list
of disjoint rectangles comprising all the entries $1$ (which we formally define in Section~\ref{sec:preliminaries} and call {\em{slabs}}); so in
particular, one may efficiently initialize a dense matrix. 
It is known (see \cite[Lemma~10]{tww-repr} for details) that such a
description with $\Oh_d(n)$ rectangles can be easily extracted
from a contraction sequence of width $d$.

\section{Preliminaries}
\label{sec:preliminaries}

All logarithms in this paper are base $2$. We assume the standard RAM model of computation. For $a,b\in \mathbb{N}$ with $a\leq b$, we define the {\em{segment}} $[a,b]\coloneqq \{a,a+1,\ldots,b\}$.

\paragraph*{Twin-ordered matrices.}
The basic terminology for matrices as well as the definition of
a $d$-twin-ordered matrix have already been discussed in
Section~\ref{sec:introduction}.
For the rest of the paper we work with a~binary $n\times n$ matrix $M$
that is at every point $d$-twin-ordered. The entry of $M$ at row $i$ and column $j$ is denoted by $M[i,j]$.




In our data structure we do not rely on contraction sequences at all,
but rather exploit structural properties of matrix $M$ following from
the assumption that it is $d$-twin-ordered. To state these properties,
we need a~few definitions. Recall that a~{\em{(row/column) block}} is
a contiguous set of rows/columns, and for a~row block $R$ and a~column
block $C$, the {\em{zone}} $M[R,C]$ is the submatrix consisting of all
the cells at the intersection of $R$ and $C$.

A {\em{corner}} in $M$ is a~$2\times 2$ zone whose both rows are different and both columns are different. We have the following fact about the number of corners in a $d$-twin-ordered matrix.

\begin{lemma}[{\cite[Theorem 5 and Lemma~8]{tww-repr}}]\label{lem:mixzones}
	Let $f_d=\frac{16}
{3}(2d + 3)^2 2^{4(2d+2)}$.  Any $d$-twin-ordered binary $n \times n$ matrix
	contains at most $f_d \cdot (n+2) \in \Oh_d(n)$ corners.
\end{lemma}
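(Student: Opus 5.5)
The statement is cited from \cite{tww-repr}, but I would reprove it by combining two ingredients. The first is a grid-minor obstruction: a $d$-twin-ordered matrix has no large ``mixed minor''. The second is a Marcus--Tardos style counting argument: a matrix with no large mixed minor has only linearly many corners.

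\textbf{Step 1 (mixed minors).} A $k$-mixed minor is a partition of a subset of consecutive rows into $k$ row blocks and of consecutive columns into $k$ column blocks such that every one of the $k^2$ zones is \emph{mixed}. Here a zone is mixed if it is not horizontal (all rows equal) and not vertical (all columns equal); equivalently, it contains a corner that is not constant along either direction. I would show that a $d$-twin-ordered matrix has no $(2d+2)$-mixed minor. Fix such a minor and run the contraction sequence. Consider the first moment at which some current row block or column block meets two different blocks of the minor; say it is a row block that intersects two consecutive minor row blocks. Before this moment, every current block was contained in at most one minor block, so this is the first merge crossing a minor boundary. The merged row block then intersects each minor column block, through at least one current column block. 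A routine case analysis should then show that at least $2d+2$, or at least $d+1$, of the resulting zones are non-constant, because every zone of the minor is mixed. This contradicts width $d$. I expect this boundary bookkeeping to be delicate but standard; it is exactly \cite[Theorem 5]{tww-repr} and the analogous statement in \cite{tww1}.

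\textbf{Step 2 (counting corners).} Call a corner at consecutive rows $i,i+1$ and consecutive columns $j,j+1$ \emph{mixed} if it is non-constant along both axes. The corners relevant to the data structure are essentially these $2\times 2$ adjacent zones. I would mark each such corner by a $1$ in an auxiliary $(n-1)\times(n-1)$ 0/1 matrix $A$, and apply a Marcus--Tardos / Füredi--Hajnal argument. Partition $A$ into $t\times t$ superblocks with $t=k^2$ for $k=2d+2$. A superblock is \emph{wide} if its marked corners occupy at least $k$ distinct columns, and \emph{tall} if they occupy at least $k$ distinct rows. If some superblock column contains many wide superblocks, choose $k$ of them. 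By pigeonhole over $\binom{t}{k}$ column subsets, some $k$ columns are shared among them, and this yields a $k$-mixed minor, contradicting Step 1. This bounds the number of wide and tall blocks per superblock column and row. Superblocks that are neither wide nor tall contain at most $k^2$ marked entries; the recursion $f(n)\le c\,t^2 f(n/t)+O_d(n)$ then solves to $O_d(n)$. The explicit constant $\frac{16}{3}(2d+3)^2 2^{4(2d+2)}$ suggests a direct rather than recursive counting with $k=2d+3$ and $4^{k}$-type pigeonholing; matching it exactly is not essential.

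\textbf{Main obstacle.} The hardest step is ensuring that marked corners in a chosen grid of zones really certify a \emph{mixed} zone, i.e.\ non-constant in both row and column directions. A single corner in a zone is witnessed by adjacent cells, so a zone containing a corner that is non-constant in both directions is mixed, and unions of zones preserve this. The additive $+2$ in $(n+2)$ I expect to come from boundary rows and columns, or from padding the matrix with a frame when non-adjacent corners are reduced to adjacent ones.
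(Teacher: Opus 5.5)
Your two-step plan matches the cited result: a $d$-twin-ordered matrix is $(2d+2)$-mixed free, and then a Marcus--Tardos-type argument counts the corners. However, both steps fail as you wrote them.

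In Step~1 you chose the wrong critical moment. The first time a part meets two minor blocks, it may create no non-constant zone at all. For example, let $R_i,R_{i+1}$ have two rows each and every $C_j$ two columns. On each $C_j$ let the four rows read $(1,0),(0,0),(0,0),(1,0)$, and let the middle two rows be identical in $M$. All these zones are mixed. Yet if the first contraction merges the last row of $R_i$ with the first row of $R_{i+1}$, the merged block has only constant zones.

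The moment that works is the first time some part contains an entire minor block, say a row block $P\supseteq R_i$. Column blocks are unchanged from the previous step, so none contains a whole $C_j$, and each meets at most two of the $C_j$'s. The rows of $R_i$ are not all equal on $C_j$, so for every $j$ the block $P$ has a non-constant zone with some column block meeting $C_j$. Each such zone is counted at most twice, so there are at least $(2d+2)/2=d+1$ of them, a contradiction.

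Step~2 has two separate problems. First, a grid of marks in $A$ does not directly give a $k$-mixed minor of $M$. The mark $(r,c)$ stands for the corner on rows $r,r+1$ and columns $c,c+1$. If two shared columns are consecutive, or marks lie in the last row of one superblock and the first row of the next, the corners overlap, and no division of $M$ puts each one entirely inside its own zone. Second, the recursion needs a forbidden pattern of $A$ that survives contracting superblocks, and mixed-minor-freeness of $M$ is not such a pattern. Also, a coefficient $c\,t^2$ in front of $f(n/t)$ gives $\Oh(n)$ only if $c\,t^2<t$.

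The repair is to forbid a $K$-grid in $A$ itself, meaning a division into $K\times K$ zones each containing a mark. With $K=2k$, taking every other row and column interval turns such a grid into a $k$-mixed minor of $M$. Grid-freeness passes to the contracted matrix, and with $t=K^2$ one gets $f(n)\le (K-1)^2 f(n/t)+\Oh_K(n)$, hence an $\Oh_d(n)$ bound.

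That still does not give the stated $f_d(n+2)$. The stated constant is exactly what you get by splitting corners into four classes by the parities of their top-left coordinates. Within one class, the corners are precisely the mixed zones of a fixed division of $M$ into $2\times 2$ zones, with boundary parts of size $1$ and at most $(n+2)/2$ parts per side. The resulting zone matrix is $(2d+2)$-grid free, since a grid there coarsens to a mixed minor. The known bound of $c_k N$ ones in a $k$-grid-free $N\times N$ matrix, with $c_k=\frac83(k+1)^2 2^{4k}$, then gives $4\cdot c_{2d+2}\cdot\frac{n+2}{2}=f_d(n+2)$.
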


%

For the purpose of this paper, a~\emph{slab} in
$M$ is defined as a~zone $M[R,C]$ entirely filled with ones, where $R$ is a~row block and $C$ is a~column block. We can also represent a~slab as
$M[[a,b],[c,d]]$ by
the four coordinates $(a,b,c,d)$, with
$a,b,c,d \in \{ 1, \ldots n \}$ and $a\leq b$ and $c\leq d$. This is in particular the representation that we use when working with slabs in algorithmic procedures.
A {\em{slab decomposition}} of $M$
is a~set $\mathcal{K}$ of pairwise disjoint slabs in $M$
such that there is no
entry $1$ outside of the slabs of $\mathcal{K}$. See Figure~\ref{fig:slab_decomposition_example_prelims} for an example.

\begin{figure}[h]
    \newcommand{\h}{\cellcolor{red!22!white}}
	\newcommand{\z}{\cellcolor{gray!22!white}}
	\newcommand{\g}{\cellcolor{white!42!white}}
	\newcommand{\y}{\cellcolor{yellow!62!white}}
	\newcommand{\q}{\cellcolor{green!32!white}}
	\newcommand{\e}{\cellcolor{black!32!white}}
	\renewcommand{\arraystretch}{1.2}
	\centering
	\begin{tabular}{|M{1em}|M{1em}|M{1em}|M{1em}|M{1em}|}
		\hline
		\g 0 & \y 1 & \y 1 & \g 0 & \z 1 \\
		\hline
		\g 0 & \g 0 & \g 0 & \g 0 & \z 1 \\
		\hline
		\g 0 & \q  1 & \q 1 & \q 1 & \g 0 \\
		\hline
		\h 1 & \h  1 & \h 1 & \g 0 & \e 1 \\
		\hline
		\h 1 & \h  1 & \h 1 & \g 0 & \e 1\\	
		\hline
	\end{tabular}
	\let\h\undefined
	\let\g\undefined
	\let\z\undefined
	\let\y\undefined
	\let\e\undefined
	\let\q\undefined
	\caption{A slab decomposition of a matrix in the form of a set $\mathcal{K} = \{M[[1,1], [2,3]],\ M[[1,2], [5,5]],\ M[[3,3], [2,4]],\ M[[4,5], [1,3]],\ M[[4,5], [5,5]]\}.$ All slabs are marked with distinct colors.}
	\label{fig:slab_decomposition_example_prelims}
\end{figure}

\begin{lemma}[{\cite[Lemma~10]{tww-repr}}]\label{lem:rectangles}
Let $M$ be an $n \times n$ binary matrix that is $d$-twin-ordered. Then
$M$ admits a slab decomposition $\mathcal{K}$ with
$|\mathcal{K}| \leq d(2n-2)+1$.
\end{lemma}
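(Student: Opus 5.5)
We would prove Lemma~\ref{lem:rectangles} by building the slab decomposition directly from a contraction sequence of width at most $d$, charging slabs to the merge steps. We run the sequence $(\mathcal{R}_0,\mathcal{C}_0),\dots,(\mathcal{R}_p,\mathcal{C}_p)$ backwards, from the single zone $M[[1,n],[1,n]]$ down to the individual cells, and maintain a family of \emph{pending} zones. Each pending zone is a non-constant zone of the current partition $(\mathcal{R}_i,\mathcal{C}_i)$, and the pending zones are pairwise disjoint. Along the way we emit slabs. The invariant is that the emitted slabs are disjoint, they are disjoint from all pending zones, and every entry $1$ of $M$ lies either in an emitted slab or in a pending zone.

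At the start, $i=p$. If the whole matrix is constant, we emit at most one slab (when it is all ones) and stop. Otherwise the single zone is pending. Now consider one step backwards, from $i$ to $i-1$, and suppose it splits a row block $R$ into $R'$ and $R''$ (the column case is symmetric). The zones of $(\mathcal{R}_i,\mathcal{C}_i)$ not involving $R$ are unchanged. Each pending zone $M[R,C]$ splits into $M[R',C]$ and $M[R'',C]$, and we handle each half as follows:
\begin{itemize}
\item if the half is all ones, we emit it as a slab;
\item if it is all zeros, we discard it;
\item if it is non-constant, it stays pending.
\end{itemize}
Disjointness and coverage are preserved immediately. At $i=0$ every zone is a single cell and hence constant, so no zone remains pending. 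The emitted slabs therefore form a slab decomposition.

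It remains to count the slabs. In step $i\to i-1$, only the pending zones $M[R,C]$ can emit slabs, and each such zone is non-constant in $(\mathcal{R}_i,\mathcal{C}_i)$. Since it lies in the row block $R$, there are at most $d$ of them. Each one emits at most $2$ slabs, but a non-constant zone cannot split into two constant halves that are both all ones. So each pending zone emits at most $2$ slabs, and at most $1$ slab if we count more carefully.

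We need this sharper count, because we want each step to contribute at most $d$ slabs in total. The cases for a single pending zone are:
\begin{itemize}
\item both halves constant: they must differ, so exactly one slab is emitted;
\item one half constant and one non-constant: at most one slab is emitted;
\item both halves non-constant: no slab is emitted.
\end{itemize}
Hence every step emits at most $d$ slabs. There are $p=2n-2$ steps (each merge reduces the number of blocks by one, starting from $2n$ blocks and ending with $2$), and the initial configuration contributes at most $1$ extra slab. This gives $|\mathcal{K}|\le d(2n-2)+1$.

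The main obstacle is the charging in the middle step: making sure that only the at most $d$ non-constant zones in the split block can produce new slabs, and that each produces at most one. The key point is that a non-constant zone cannot split into two halves that are both all ones.
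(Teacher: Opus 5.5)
Your proof is correct and follows essentially the standard argument behind the cited \cite[Lemma~10]{tww-repr}; the paper itself only quotes that lemma. That argument runs the contraction sequence backwards, refines only the at most $d$ non-constant zones of the block being split, and charges at most $d$ new pieces to each of the $2n-2$ steps, plus one for the initial zone. The only difference is bookkeeping: you count just the emitted all-ones pieces (at most one per split zone), while the usual version counts all constant pieces of the final partition (each split adds one piece), and both give the bound $d(2n-2)+1$.
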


We conclude this part with a summary of a~static data stracture, which was described by Chan in~\cite{Chan2DPL}.
We rephrase his result using the terminology of our paper:
\begin{theorem}[Chan~\cite{Chan2DPL}]\label{thm:Chan} Given a~set of $N$ disjoint slabs with coordinates belonging to $\{1,2,\dots, U\}$, we can build $\Oh(N)$-space data structure, such that given a~pair $(i,j)$, it returns the slab containing $(i,j)$ or returns an information that there is no such slab in  $\Oh(\log \log U)$ time. The preprocessing time is $\Oh(N \log \log U)$.
\end{theorem}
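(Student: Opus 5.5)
This is a cited result, so within the paper one would simply invoke~\cite{Chan2DPL}. If I had to reprove it, I would reduce it to \emph{persistent predecessor search} by sweeping over the rows. Because the slabs are pairwise disjoint, every row $i$ meets the slabs it intersects in disjoint column intervals. Each slab $(a,b,c,d)$ enters the sweep at row $a$ and leaves after row $b$, so there are $2N$ update events in total.

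\textbf{Step 1: reduce a query to two predecessor searches.} For each row, let $S_i$ be the set of left endpoints $c$ of the slabs currently intersecting row $i$, each tagged with its slab. A query $(i,j)$ then proceeds as follows.
\begin{itemize}
\item Locate the version of the sweep structure in force at row $i$. This is a predecessor search among the at most $2N$ event rows, which a y-fast trie answers in $\Oh(\log\log U)$ time with $\Oh(N)$ space.
\item In that version, find the predecessor $c\le j$ in $S_i$.
\item Check whether the slab tagged at $c$ satisfies $j\le d$. By disjointness, this is the only slab that could contain $(i,j)$.
\end{itemize}
So the theorem reduces to a \emph{partially persistent} predecessor structure over universe $[1,U]$ that handles $2N$ insertions and deletions using $\Oh(N)$ total space, $\Oh(\log\log U)$ amortized update time for preprocessing, and $\Oh(\log\log U)$ query time in any past version.

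\textbf{Step 2: build the persistent structure by van Emde Boas-style recursion in two dimensions.} I would split the column universe into $\sqrt U$ chunks of size $\sqrt U$. In parallel, I would split the time axis (the sequence of events) into groups, so that each recursive subproblem is again a point-location instance with a universe of size about $\sqrt U$. A query spends $\Oh(1)$ time per level, via a hash table or a direct lookup, to decide whether the answer lies in its own chunk or in the ``summary'' of nonempty chunks. It then recurses into exactly one of them, so the depth, and hence the query time, is $\Oh(\log\log U)$. With naive copying, each event would be charged at every level. To avoid this, I would make the top-level summaries record only changes of emptiness of a chunk, and use standard fat-node persistence with hash tables keyed by version.

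\textbf{Main obstacle: achieving linear space.} The naive recursion gives $\Oh(N\log\log U)$ space, since each slab appears at every level. To remove the $\log\log U$ factor, I would first try to sample the rectangles into buckets of size $\Theta(\log\log U)$, or more generally $\mathrm{polylog}\,U$. The recursive structure would then be built only on bucket representatives, with $\Oh(N/\log\log U)$ of them, and the problem inside a bucket would be resolved by a small secondary structure. For the secondary structure I would use either fusion-node or packed word-RAM tricks, which give constant or $\Oh(\log\log U)$ time, or a second, recursively space-reduced copy of the same structure. The delicate point is ensuring that a query landing in a bucket still needs only one such secondary search. This bucketing step is where I expect the real difficulty, and it is the core technical contribution of Chan's paper.
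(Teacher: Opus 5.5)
Your proposal agrees with the paper. The paper gives no proof of this statement and simply imports it from \cite{Chan2DPL} as a black box. Your Step~1 is a correct reduction to partially persistent predecessor search: you sweep over rows, search for the predecessor among the left endpoints of the slabs alive in row $i$, and then make a single check of the right endpoint, which suffices by disjointness.

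Your Step~2, however, does not establish even the query bound, let alone the space bound. Fat-node persistence ``with hash tables keyed by version'' cannot fetch a node's state at an arbitrary query version $t$ by hashing. Finding the node's last modification at or before $t$ is itself a predecessor search on the time axis. The ``own chunk versus summary'' decision has the same problem, because it needs the chunk's minimum at time $t$. Paying such a search at each of the $\Theta(\log\log U)$ levels gives only $\Oh((\log\log U)^2)$, the bound known before Chan's work. Splitting the time axis ``in parallel'' does not fix this by itself: a slab alive across many time groups would have to be replicated in each of them, or handled by a mechanism you do not describe.

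So the sketch is missing its key ingredient already for the $\Oh(\log\log U)$ query time, not only at the linear-space step you flag. For the purposes of this paper, though, the citation is all that is required.
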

Let us call the query described in the statement of Theorem~\ref{thm:Chan} a {\em point location query} and the data structure {\em a~static 2D orthogonal point location data}.

\paragraph*{Dictionaries.}
In this work we make extensive use of \emph{van Emde Boas} dictionaries,
called also {van Emde Boas trees}~\cite{VANEMDEBOAS197780}. The
van Emde Boas tree
maintains a~set of keys $S$, which is a~subset of the
domain $\{ 1, \ldots, U \}$ for $U\in \mathbb N$, by supporting the following
operations:
\begin{itemize}
 \item $\mathtt{Insert}(\mathtt{x})$:
    insert key $\mathtt{x}$ to $S$;
 \item $\mathtt{Delete}(\mathtt{x})$: remove key $\mathtt{x}$
 from $S$;
 \item $\mathtt{Lookup}(\mathtt{x})$: verify whether $\mathtt{x}$
 belongs to the maintained set of keys $S$;
 \item $\mathtt{Successor}(\mathtt{x})$: find the smallest key in
 $S$ greater than $\mathtt{x}$;
 \item $\mathtt{Predecessor}(\mathtt{x})$: find the largest key
 in $S$ smaller than $\mathtt{x}$.
\end{itemize}
All operations work in worst case time $\Oh(\log \log U)$ and
the
required space is $\Oh(U)$ \cite{VANEMDEBOAS197780}.

Recall that a~{\em{segment}} $[a,b]$, for $a,b \in \mathbb{N}, a~\leq b$, is defined as the
set of consecutive natural numbers between its coordinates $a$ and $b$:
$[a,b]= \{a, a+1, \ldots, b-1, b \}$. Two segments $[a,b]$ and
$[a',b']$ are called
{\em{adjacent}} if either $b+1=a'$ or $a-1=b'$.
In the next lemma we describe a modification of van Emde Boas trees, which gives
us the data structure to maintain a~set of
disjoint and non-adjacent segments with coordinates in $\{ 1, \ldots, n \}$, which we shall call an {\em{adhesive segment set}}.


\begin{lemma}\label{lem:segment_tree}
Let $n \in \mathbb{N}$. There is an \emph{adhesive segment set} data
structure $\mathbb{S}$ that
maintains a
dynamic set $S$ of
pairwise disjoint and pairwise non-adjacent segments
whose coordinates belong to $\{ 1, \ldots, n \}$.
The data structure supports
the following methods:
\begin{itemize}
	\item $\mathtt{Containing}([a,b])$: Returns the segment
	of $S$ that contains
	$[a,b]$, if existent. 
	\item $\mathtt{Adjacent}([a,b])$: Returns the (at most two)
	segments in $S$
	adjacent to $[a,b]$.
	\item $\mathtt{Disjoint}([a,b])$: Returns whether $[a,b]$ is disjoint from all the segments in $S$.
 \item $\mathtt{Merge}([a,b])$: If $[a,b]$ is not disjoint from all the segments in $S$, the method does nothing, and otherwise it performs the following modification. If $[a,b]$ is not adjacent to any segment in $S$, the method adds $[a,b]$ to $S$. If $[a,b]$ is adjacent to one segment $s\in S$, the method replaces $s$ with $s'\coloneqq s\cup [a,b]$. If $[a,b]$ is adjacent to two segments $s_1,s_2\in S$, the method replaces $s_1$ and $s_2$ with $s'\coloneqq [a,b] \cup s_1 \cup s_2$.
 \item $\mathtt{Split}([a,b])$: If $[a,b]$
 is contained
 in some segment $[a',b'] \in S$, the method
 replaces the segment $[a',b']$ in
 $S$ with segments $[a',a-1]$ and $[b+1,b']$. If any of these segments turns out to be empty, it is not added to $S$. If there is no segment $[a',b']\in S$ containing $[a,b]$, the method does nothing.
\end{itemize}
The data structure occupies $\Oh(n)$ memory and all operations
run in $\Oh(\log \log n)$ worst case time. The data structure can be initialized on an empty $S$ in time $\Oh(n)$.
\end{lemma}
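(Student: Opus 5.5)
The plan is to represent the set $S$ by two van Emde Boas trees over the domain $\{1,\ldots,n\}$. The tree $T_L$ stores the left endpoints of the segments in $S$, and $T_R$ stores the right endpoints. Alongside them we keep two arrays of length $n$: $\mathrm{right}[a]=b$ and $\mathrm{left}[b]=a$ for every segment $[a,b]\in S$. Both arrays are valid only at positions currently present in the respective tree. Since the segments are pairwise disjoint, each coordinate is the left endpoint of at most one segment, so the arrays are well defined.

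\textbf{Space and initialization.} The two trees take $\Oh(n)$ memory by the bounds of~\cite{VANEMDEBOAS197780}, and the arrays take $\Oh(n)$ as well. Initialization on an empty $S$ builds empty trees and arrays in $\Oh(n)$ time. The arrays need no initialization, because an entry is read only after a successful lookup in the corresponding tree.

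\textbf{Queries.}
\begin{itemize}
\item $\mathtt{Containing}([a,b])$: let $a'$ be the largest left endpoint that is at most $a$, found by $\mathtt{Lookup}(a)$ followed, if needed, by $\mathtt{Predecessor}(a)$ in $T_L$. Set $b'=\mathrm{right}[a']$ and return $[a',b']$ if $b\le b'$. Disjointness guarantees that if some segment contains $[a,b]$, it is exactly this one.
\item $\mathtt{Adjacent}([a,b])$: check $\mathtt{Lookup}(a-1)$ in $T_R$ and $\mathtt{Lookup}(b+1)$ in $T_L$, and recover the full segments through the arrays.
\item $\mathtt{Disjoint}([a,b])$: the interval $[a,b]$ meets some segment if and only if either the segment with the largest left endpoint $\le b$ has right endpoint $\ge a$, or no such segment exists. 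Hence one predecessor/lookup query in $T_L$ plus one array access suffices. Because the segments are disjoint and sorted, checking this single candidate is enough.
\end{itemize}
All of these use a constant number of vEB operations, so each runs in $\Oh(\log\log n)$ time.

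\textbf{Updates.}
\begin{itemize}
\item $\mathtt{Merge}([a,b])$: first call $\mathtt{Disjoint}$, then $\mathtt{Adjacent}$. If the neighbours are $[a_1,a-1]$ and/or $[b+1,b_2]$, delete their endpoints $a-1$ from $T_R$ and $b+1$ from $T_L$. The new segment is $[a_1,b_2]$, where $a_1=a$ and $b_2=b$ when the corresponding neighbour is absent. Insert $a$ into $T_L$ or $b$ into $T_R$ exactly when the respective neighbour is missing, and set $\mathrm{right}[a_1]=b_2$ and $\mathrm{left}[b_2]=a_1$.
\item $\mathtt{Split}([a,b])$: call $\mathtt{Containing}$ to obtain $[a',b']$. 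Delete the endpoints $a'$ and $b'$, then insert the endpoints of the nonempty pieces $[a',a-1]$ and $[b+1,b']$ and update the arrays.
\end{itemize}
Each update performs $\Oh(1)$ vEB operations and array writes, so it runs in $\Oh(\log\log n)$ worst-case time.

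Disjointness of the resulting set is immediate in both updates. Non-adjacency is preserved by $\mathtt{Merge}$, because it absorbs all adjacent neighbours. For $\mathtt{Split}$, the new pieces are separated by the removed part $[a,b]$, and the outer neighbours were already non-adjacent to $[a',b']$, hence also to the pieces.

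The only step needing care is correctness of $\mathtt{Disjoint}$ and $\mathtt{Containing}$ via a single predecessor candidate. This follows from the sorted, disjoint structure of $S$, since at most one segment can have its left endpoint in the relevant range and still reach position $a$. I expect no real obstacle beyond this bookkeeping.
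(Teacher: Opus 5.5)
Your proof is correct and uses the same core idea as the paper. Store the segment endpoints in van Emde Boas dictionaries over $[1,n]$, and implement each method with $\Oh(1)$ lookup/predecessor/successor calls, using disjointness so that a single predecessor candidate decides $\mathtt{Containing}$ and $\mathtt{Disjoint}$.

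The difference is in the bookkeeping. The paper keeps one dictionary $\mathbb{D}$ of all endpoints, with one bit per key marking it as a first or second coordinate. It recognizes a segment as a pair $a'$, $b'=\mathtt{Successor}(a')$. You instead keep separate trees $T_L$, $T_R$ together with the partner arrays $\mathrm{right}$ and $\mathrm{left}$. This costs a second tree but is more robust:
\begin{itemize}
\item A one-point segment $[x,x]$ is harmless in your scheme. Such segments occur in the application, e.g.\ strips of height one. In the paper's encoding the key $x$ would have to be both a first and a second coordinate, and the test $b'=\mathtt{Successor}(a')$ fails, so the paper's sketch needs a small patch there.
\item You handle the empty pieces in $\mathtt{Split}$ (when $a=a'$ or $b=b'$) explicitly; the paper's description leaves this case implicit.
\end{itemize}

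One small slip in your $\mathtt{Disjoint}$ characterization: the clause ``or no such segment exists'' belongs to the disjoint case, not the intersecting one. The correct statement is that $[a,b]$ meets some segment of $S$ if and only if a largest left endpoint $x\le b$ exists and $\mathrm{right}[x]\ge a$. The justification is that if $[a,b]$ meets $[x_0,y_0]$ and $x>x_0$, then disjointness gives $\mathrm{right}[x]\ge x>y_0\ge a$.
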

\begin{proof}
We use the van Emde Boas dictionary $\mathbb{D}$ with universe $[1,n]$. We store in $\mathbb{D}$
the coordinates of the segments of $S$. Additionally, for each coordinate we store an extra bit determining if it is the first or the
second coordinate. These extra bits are stored in a static array with entries $[1,n]$: entry $i$ stores the suitable piece of data for the coordinate equal to $i$.

To implement $\mathtt{Containing}([a,b])$ we issue
$a'\coloneqq \mathtt{Predecessor}(a+1)$ and
$b'\coloneqq \mathtt{Successor}(b-1)$ to $\mathbb{D}$. We
return $[a',b']$, provided $a'$ is the first coordinate of a~segment, $b'$ is the second coordinate of a~segment, and $b'=\mathtt{Successor}(a')$; otherwise there is no segment in $S$ containing $[a,b]$. Queries $\mathtt{Adjacent}([a,b])$ and $\mathtt{Disjoint}([a,b])$ are implemented
analogously.

To implement $\mathtt{Merge}([a,b])$, we first check whether $[a,b]$ is disjoint from all the segments in $S$ by invoking $\mathtt{Disjoint}([a,b])$. 
If that is not the case, we do nothing.
Otherwise, we issue
$a'\coloneqq\mathtt{Predecessor}(a)$ and
$b'\coloneqq\mathtt{Successor}(b)$ queries to $\mathbb{D}$.
Since $[a,b]$ is disjoint from segments in $S$, $a'$ is the second
and $b'$ is the first coordinate of a~segment in $S$. If $a'+1=a$,
we issue $a'':=\mathtt{Predecessor}(a')$ query to
$\mathbb{D}$ to find the first coordinate of the segment
$[a'',a'] \in S$. Then, we delete $a'$ from $\mathbb{D}$ and
insert $b$, indicating that this is the second coordinate of a~segment. 
Then we
proceed with $b'$ in an analogous fashion.

To implement $\mathtt{Split}([a,b])$, we issue
$[a',b']=\mathtt{Containing}([a,b])$ to find
the interval
$[a',b']\in S$ containing $[a,b]$. If there is no such interval, there is nothing to do.
Otherwise, we insert to
$\mathbb{D}$ key $a-1$ indicating that it is the second coordinate,
and we insert key $b+1$ indicating that it is the first coordinate.
\qed


\end{proof}

\section{Canonical slab decomposition}
\label{sec:decomposition}


From Lemma~\ref{lem:rectangles} we know that for any $d$-twin-ordered $n \times n$ matrix there always exists a slab decomposition into $\Oh_d(n)$ slabs, which is a crucial property of our dynamic data structure. However, we need to provide an efficient algorithm to compute such decomposition. 
In this Section, we describe the canonical slab decomposition, which is a slab decomposition with additional structural properties. This allows us to use the adhesive segment set data structure, described in Lemma~\ref{lem:segment_tree}, to compute a desired decomposition in Theorem~\ref{thm:canonical}.

\newcommand{\strips}{\mathtt{strips}}

The slab decomposition we are going to compute is, in some sense, canonical. To describe it, we need a~few definitions. Recall that $M$ is a~binary $n\times n$ matrix. For $i\in [1,n]$, an {\em{$i$-strip}} is a~slab of the form $(a,b,i,i)$ for some $1\leq a\leq b\leq n$ such that neither $(a-1,b,i,i)$ nor $(a,b+1,i,i)$ is a~slab. In other words, an $i$-strip is a~slab of width $1$ contained in the $i$-th column such that it is not possible to extend this slab upwards or downwards, due to either finding an entry $0$ there or the border of the matrix $M$. 
By $\strips_i$ we denote the set of all $i$-strips; and $\strips\coloneqq \bigcup_{1\leq i\leq n} \strips_i$ is the set of all the strips. For example, if $(0, 1, 1, 0, 1, 0, 1, 1, 1)$ are the consecutive entries of the $ith$ column of a matrix, the set $\strips_i$ is equal to $\{(2,3,i,i), (5,5, i, i), (7,9, i,i)\}$.

We say that strips $(a,b,i,i)\in \strips_i$ and $(a,b,j,j)\in \strips_j$ are {\em{siblings}} if for every $k\in [i,j]$ we have $(a,b,k,k)\in \strips_k$.
Note that the relation of being a~sibling is an equivalence relation on $\strips$. We call the equivalence classes of this relations {\em{canonical slabs}}; we identify each canonical slab with its union, which is always a~slab in $M$. Noting that each strip belongs to exactly one canonical slab, we define the {\em{canonical slab decomposition}} of $M$ as the family of all canonical slabs of $M$. For examples of the canonical slab decomposition see Figure~\ref{fig:slab_decomposition_example}.

\begin{figure}[h]
    \newcommand{\h}{\cellcolor{red!22!white}}
	\newcommand{\z}{\cellcolor{gray!22!white}}
	\newcommand{\g}{\cellcolor{blue!42!white}}
	\newcommand{\y}{\cellcolor{yellow!62!white}}
	\newcommand{\q}{\cellcolor{green!32!white}}
	\newcommand{\e}{\cellcolor{black!32!white}}
	\renewcommand{\arraystretch}{1.2}
	\centering
	\begin{tabular}{|M{1em}|M{1em}|M{1em}|M{1em}|M{1em}|}
		\hline
		\q 1 & \z 1 & \g 1 & \h 1 & \h 1 \\
		\hline
		   0 & \z 1 &    0 & \h 1 & \h 1 \\
		\hline
		\e 1 & \z 1 &    0 &    0 & 0 \\
		\hline
		\e 1 & \z 1 & \y 1 & \y 1 & \y 1 \\
		\hline
		\e 1 &    0 &    0 &    0 &  0 \\	
		\hline	
	\end{tabular}
	\hspace{0.5cm}
	\begin{tabular}{|M{1em}|M{1em}|M{1em}|M{1em}|M{1em}|}
		\hline
		   0 & \y 1 & \y 1 & 0 & 0 \\
		\hline
		   0 & \y 1 & \y 1 & 0 & 0 \\
		\hline
		\e 1 & \y 1 & \y  1 & \h  1 & 0 \\
		\hline
		\e 1 & \y 1 & \y 1 & \h 1 & 0 \\
		\hline
		 0 &    0 &    0 &    0 &  0 \\		
		\hline
	\end{tabular}
	
	\let\h\undefined
	\let\g\undefined
	\let\z\undefined
	\let\y\undefined
	\let\e\undefined
	\let\q\undefined
	\caption{The canonical slab decomposition of a matrix -- all canonical slabs are marked with distinct colors. Note that in the second matrix the slab decomposition may contain just two slabs $(3,4,1,4)$, $(1,2, 2,3)$, while in the canonical slab decomposition three slabs are required.}
	\label{fig:slab_decomposition_example}
\end{figure}

In the next lemma we show that the canonical slab decomposition of a~$d$-twin-ordered matrix is always~small.

\begin{lemma}\label{lem:canonical-small}
	Let $\cal R$ be the canonical slab decomposition of a~binary $n\times n$ matrix $M$ that is $d$-twin-ordered. Then $|{\cal R}|\leq 4 f_d \cdot (n+2)+4n \in \Oh_d(n)$, where $f_d$ is 
	the~constant dependent on $d$ defined in Lemma~\ref{lem:mixzones}.
\end{lemma}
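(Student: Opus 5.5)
We bound $|\mathcal R|$ by charging each canonical slab to a corner of $M$ or to a column, and then use Lemma~\ref{lem:mixzones}. Every canonical slab $(a,b,i,j)$ has a rightmost column $j$. Either $j=n$, or the strip $(a,b,j,j)$ has no sibling in column $j+1$, i.e.\ $(a,b,j+1,j+1)\notin\strips_{j+1}$. Column $n$ contains at most $\lceil n/2\rceil$ strips, since strips in one column are separated by zeros. So slabs ending in column $n$ contribute at most $n$, and it remains to count pairs $(s,j)$ with $s=(a,b,j,j)\in\strips_j$ and $(a,b,j+1,j+1)\notin\strips_{j+1}$. We call such a strip \emph{right-terminal}.

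Fix a right-terminal strip $s=(a,b,j,j)$. Let $x$ be the column-$(j+1)$ vector restricted to rows $[a-1,b+1]$; the rows $a-1$ and $b+1$ exist only when they lie inside the matrix, and where they exist column $j$ has $0$ there. Since $x$ does not equal column $j$'s pattern $(0,1,\dots,1,0)$ on these rows, at least one of the following holds:
\begin{itemize}
\item[(i)] some row $r\in[a,b]$ has $M[r,j+1]=0$;
\item[(ii)] $M[a-1,j+1]=1$;
\item[(iii)] $M[b+1,j+1]=1$.
\end{itemize}
In case (ii), if $a-1\geq 1$ then the zone on rows $\{a-1,a\}$ and columns $\{j,j+1\}$ has entries $M[a-1,j]=0$, $M[a,j]=1$, $M[a-1,j+1]=1$. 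This zone is a corner which is non-constant, and it is determined by $(a,j)$, so we charge $s$ to it. Case (iii) is symmetric, using rows $\{b,b+1\}$.

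In case (i), if $M[a,j+1]=0$ we have the corner on rows $\{a,a+1\}$ (or $\{a-1,a\}$ if $a=n$) and columns $\{j,j+1\}$, which contains the $1$ at $(a,j)$ and the $0$ at $(a,j+1)$. Otherwise, take the topmost zero at some $r\in(a,b]$, and use the corner on rows $\{r-1,r\}$ and columns $\{j,j+1\}$, which contains $M[r-1,j+1]=1$ and $M[r,j+1]=0$. To keep the charging injective we fix a canonical choice: use the corner whose top-left cell is $(a-1,j)$, $(a,j)$, $(b,j)$ or $(r-1,j)$ respectively. Here $(a,j)$ and $(b,j)$ determine $s$, since strips in a column are disjoint. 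The position $(r-1,j)$ also determines $s$, because $s$ is the unique strip of column $j$ containing row $r-1$.

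Hence each $2\times2$ zone with top-left cell $(p,j)$ receives a charge from at most four strips, one per rule. Summing, the number of right-terminal strips is at most $4f_d(n+2)$ by Lemma~\ref{lem:mixzones}. The remaining boundary cases, where the chosen corner rows fall outside the matrix, are absorbed by the extra $4n$ term together with the column-$n$ slabs.

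The main obstacle is making the charging injective with a multiplicity of exactly $4$, while handling the boundary cases cleanly. I would enumerate the four rules explicitly and check that a given corner can arise from at most one strip under each rule. I would also check that the additive $4n$ term covers both the column-$n$ slabs and the strips touching the first or last row.
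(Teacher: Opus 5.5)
Your overall strategy is the same as the paper's. You charge each canonical slab, through its last column $j$, to a $2\times 2$ zone in columns $\{j,j+1\}$ that meets it, and you put the border slabs into the additive term. However, one of your charging rules lands on zones that are not corners.

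You treat every $2\times2$ zone containing both a $0$ and a $1$ as a corner. In the paper, a corner is a $2\times 2$ zone whose two rows differ \emph{and} whose two columns differ, and Lemma~\ref{lem:mixzones} counts only these. Consider case (i) with $M[a,j+1]=0$ and $a<b$. Your zone on rows $\{a,a+1\}$ is $\bigl(\begin{smallmatrix}1&0\\1&\ast\end{smallmatrix}\bigr)$. If also $M[a+1,j+1]=0$, it equals $\bigl(\begin{smallmatrix}1&0\\1&0\end{smallmatrix}\bigr)$, whose two rows coincide, so it is not a corner.

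This is not a technicality. The matrix whose odd columns are all ones and whose even columns are all zeros is $1$-twin-ordered, yet all $(n-1)^2$ of its $2\times2$ zones are non-constant. So only the stricter notion admits a linear bound. The same example shows that your description of the exceptional cases is wrong. If column $j$ is all ones and column $j+1$ is all zeros, then $(1,n,j,j)$ is right-terminal and none of your chosen rows falls outside the matrix. Yet columns $\{j,j+1\}$ contain no corner at all.

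The repair is to charge to the place where column $j$ itself changes value. In the faulty subcase, if $a>1$, take rows $\{a-1,a\}$; the zone $\bigl(\begin{smallmatrix}0&\ast\\1&0\end{smallmatrix}\bigr)$ is a genuine corner. Each column has at most one right-terminal strip with $a=1$, so there are at most $n$ of them. These go into the additive term, together with the at most $n$ slabs ending in column $n$.

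Your other rules do produce genuine corners: $\bigl(\begin{smallmatrix}0&1\\1&\ast\end{smallmatrix}\bigr)$, $\bigl(\begin{smallmatrix}1&\ast\\0&1\end{smallmatrix}\bigr)$, $\bigl(\begin{smallmatrix}1&1\\1&0\end{smallmatrix}\bigr)$, and $\bigl(\begin{smallmatrix}1&0\\0&\ast\end{smallmatrix}\bigr)$ when $a=b<n$. With the repair, your multiplicity count gives at most $4f_d(n+2)+2n$, which is within the claimed bound.

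This repaired argument is essentially the paper's proof. For slabs away from the border, the paper splits on whether column $j+1$ restricted to $[a,b]$ is mixed, all ones, or all zeros:
\begin{itemize}
\item mixed: a corner $\bigl(\begin{smallmatrix}1&x\\1&y\end{smallmatrix}\bigr)$ with $x\neq y$ sits at a transition inside $[a,b]$;
\item all ones: the segment must extend past row $a-1$ or $b+1$;
\item all zeros: there is a corner at the top end.
\end{itemize}
It then uses the fact that each corner meets at most four of the pairwise disjoint slabs.
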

\begin{proof}
	Let $(a,b,c,d)$ be a~canonical slab in $M$ with $1<a\leq b< n$ and $1<c\leq d<n$.
	Our goal is to show that $(a,b,c,d)$ must intersect a~corner in $M$. This will prove the claim, as there can be at most $4n$ canonical slabs satisfying $a=1$ or $b=n$ or $c=1$ or $d=n$, whereas the number of corners in $M$ is at most $f_d(n+2)$ by Lemma~\ref{lem:mixzones} and each corner in $M$ can intersect at most $4$ slabs of $\cal R$.
	
	Let $Z=M[[a,b],[d+1,d+1]]$.
	As $(a,b,c,d)$ is a~canonical slab, we have $(a,b,d+1,d+1)\notin \strips_{d+1}$; that is, $Z$ is not a~$(d+1)$-strip. If $Z$ contains both an entry $0$ and an entry $1$, then the zone $M[[a,b],[d,d+1]]$ must contain a~corner that intersects $(a,b,c,d)$. If all the entries of $Z$ are $1$ (i.e., $Z$ is a~slab), then by $(a,b,d,d)\in \strips_d$ and $(a,b,d+1,d+1)\notin \strips_{d+1}$ we have that one of the zones $M[[a-1,a],[d,d+1]]$ or $M[[b,b+1],[d,d+1]]$ must be a~corner that intersects $(a,b,c,d)$. Finally, if all the entries of $Z$ are $0$, then both  $M[[a-1,a],[d,d+1]]$ and $M[[b,b+1],[d,d+1]]$ are corners that intersect $(a,b,c,d)$.
\qed
\end{proof}

We now provide an algorithm that computes the canonical slab decomposition of a~given binary matrix $M$. The algorithm assumes that $M$ is provided on input via some slab decomposition, possibly not canonical.

\begin{theorem}\label{thm:canonical}
There is an algorithm $\mathtt{Decompose}(n,\mathcal{K})$, that
given
an $n \times n$ binary matrix $M$ provided by specifying its slab decomposition
$\mathcal{K}$, computes the canonical slab decomposition $\mathcal{R}$ of
$M$ in $\Oh(n+|\mathcal{K}|\log \log n)$ time.
\end{theorem}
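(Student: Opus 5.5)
We compute $\mathcal{R}$ by a left-to-right sweep over the columns $i=1,\dots,n$. The sweep maintains an adhesive segment set $\mathbb{S}$ from Lemma~\ref{lem:segment_tree} whose segments are exactly the row ranges $[a,b]$ of the strips in $\strips_i$. For each such segment we also store the column $c$ at which the current run of siblings started, in an auxiliary array indexed by the first coordinate $a$. Because $i$-strips are maximal runs of ones in column $i$, the elements of $\strips_i$ are pairwise disjoint and pairwise non-adjacent, so they fit the invariant of $\mathbb{S}$. Initializing $\mathbb{S}$ to empty costs $\Oh(n)$.

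\textbf{Events.} From $\mathcal{K}$ we generate, for each slab $(a,b,c,d)$, an \emph{insertion event} of $[a,b]$ at column $c$ and a \emph{removal event} of $[a,b]$ at column $d+1$. These $2|\mathcal{K}|$ events are bucket-sorted by column in $\Oh(n+|\mathcal{K}|)$ time. Column $i$ of $M$ is the disjoint union of the row segments of the slabs of $\mathcal{K}$ that are active at $i$. Passing from column $i-1$ to column $i$ therefore consists of $\mathtt{Split}$ operations for removed segments, followed by $\mathtt{Merge}$ operations for inserted ones. After processing, the segments of $\mathbb{S}$ are exactly $\strips_i$, and each event costs $\Oh(\log\log n)$.

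\textbf{Emitting canonical slabs.} A strip $[a,b]$ of column $i-1$ that is no longer a strip in column $i$ terminates its canonical slab $(a,b,c,i-1)$, which we output. A strip of column $i$ that was not a strip in column $i-1$ starts a new run with $c=i$. All other strips continue unchanged, and we must not touch them, since there may be $\Theta(n)$ of them per column and touching them would cost $\Theta(n^2)$ overall.

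The main obstacle is to identify the affected strips in $\Oh(\log\log n)$ amortized time per event. The idea is that only segments touched by events can change. For each event at column $i$, before applying it, we call $\mathtt{Containing}$ (for removals) or $\mathtt{Adjacent}$ (for insertions) to find the at most two current segments it touches. Each such segment is closed: we output its canonical slab, ending at $i-1$, and mark it as pending. We then apply all updates. Afterwards, for each event we query $\mathtt{Containing}$ on the event's rows, which finds the remaining pieces after a $\mathtt{Split}$ via $\mathtt{Adjacent}$ on the removed range. This yields the $\Oh(1)$ new segments per event in column $i$, and for each we set $c=i$.

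One subtlety remains. A segment might be split and then re-merged into exactly the same range $[a,b]$, for instance when one slab ends and another begins with identical rows. In that case the canonical slab must continue, not be restarted. To handle this, we buffer the closures of column $i$ in a list, and before outputting, compare each closed $[a,b]$ against the new segments of column $i$. If an identical segment exists, we cancel the closure and keep the old $c$. This check needs one $\mathtt{Containing}$ lookup plus an equality test.

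\textbf{Correctness and running time.} A segment untouched by any event at column $i$ is a strip in both column $i-1$ and column $i$ with the same range, so its canonical slab correctly continues. At the end of the sweep we close all remaining segments at column $n$, enumerating them by successor queries. The total number of operations is $\Oh(|\mathcal{K}|)$ plus the size of the output, and each canonical slab is created by some event, so the output size is also $\Oh(|\mathcal{K}|+n)$. Hence the running time is $\Oh(n+|\mathcal{K}|\log\log n)$.
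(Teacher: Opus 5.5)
Your proposal is correct and uses the same machinery as the paper: a left-to-right sweep over an adhesive segment set holding $\strips_i$. The only strips that can change are found by $\mathtt{Containing}$/$\mathtt{Adjacent}$ queries on the row ranges of slabs of $\mathcal{K}$ that close or open, filtered by an exact-membership test, and start columns are kept in an array indexed by the top row (the paper's $\mathsf{slab\_start}$). The difference is organizational: the paper computes the sets $B_i$ in this sweep, the sets $A_i$ in a mirror-image right-to-left sweep, and assembles the slabs in a third pass. You instead find newly born strips in the same pass via post-update queries, which is valid because every strip of $\strips_i\setminus\strips_{i-1}$ either contains an inserted range or is adjacent to a removed one, and you handle the split-and-remerged case by cancelling closures.
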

\begin{proof}
	Recall that for $i\in [1,n]$, $\strips_i$ is the set of all $i$-strips of $M$. For convenience, within this proof we consider the elements of $\strips_i$ to be segments; that is, $[a,b]\in \strips_i$ if and only if $(a,b,i,i)$ is an $i$-strip.
	
	Our first goal is to compute, for each $i\in [1,n]$, the sets
	$A_i=\strips_i\setminus \strips_{i-1}$ and $B_i\coloneqq \strips_i\setminus \strips_{i+1}$,
	where $\strips_0 = \strips_{n+1}=\emptyset$ by convention. 
	Note that by the definition each canonical slab must start with some strip belonging to $A_i$ and end with a strip belonging to $B_j$ for some $i\leq j$, $i,j\in [1,n]$. 
	Thus, we first describe how to compute the sets $A_i$ and $B_i$ for $i \in [1, n]$ from the given slab decomposition, and later we explain how to obtain the canonical slab decomposition from them.
	
	Let us start with computing for each $i\in [1,n]$ the sets of {\em{opening}} and {\em{closing segments}} of $\cal K$ in column $i$. 
	\[O_i\coloneqq \{[a,b]~|~(a,b,i,d)\in {\cal K} \textrm{ for some $d\geq i$}\};\] \[C_i\coloneqq \{[a,b]~|~(a,b,c,i)\in {\cal K} \textrm{ for some $c\leq i$}\}.\]
	
	The sets $O_i$ and $C_i$ can be easily created in $\Oh(n + |\cal K|)$ by first initializing all of them as empty linked lists and then iterating through all elements $(a, b, c, d) \in \cal K$ and appending $[a, b]$ to $O_c$ and $C_d$. Note that in particular we have $\sum_{i=1}^n (|O_i|+|C_i|)\leq 2|\cal K|$.
	
	We describe now how by a {\em{sweep}} on the set of strips from left to right we can compute sets $B_i$ for $i\in [1,n]$. We illustrate the procedure in Figure~\ref{fig:computing_B_i}.
	We initialize an adhesive segment set data structure $\mathbb{S}$ with the universe $[1,n]$. Next, we iterate through all of the column indices $i\in [1,n]$ maintaining the following invariant: after the $i$-th iteration, the set of segments stored in $\mathbb{S}$ is equal to $\strips_{i+1}$. We initialize the iteration by invoking $\mathtt{Merge}(s)$ for each $s\in O_1$ (which always adds the segment to the structure, as they are pairwise disjoint, and possibly merges it with adjacent segments); so after this operation, $\mathbb{S}$ stores exactly $\strips_1$ and therefore the invariant is satisfied for $i=0$. In the $i$-th iteration, we need to apply updates to $\mathbb{S}$ so that the stored set of segments is modified from $\strips_i$ to $\strips_{i+1}$, and at the same time compute the set $B_i=\strips_i\setminus \strips_{i+1}$. We start with constructing a~list $L$ of all the segments of $\strips_i$ that may be possibly affected by the update. These are:
	\begin{itemize}
		\item all the intervals $s\in \strips_i$ containing some $s'\in C_i$, and
		\item all the intervals $s\in \strips_i$ adjacent to some $s'\in O_{i+1}$.
	\end{itemize}
	The list $L$ can be constructed in time $\Oh((|C_i|+|O_{i+1}|)\log \log n)$ by invoking in our data structure $\mathbb{S}$ procedure $\mathtt{Containing}(s')$ for each $s'\in C_i$ and $\mathtt{Adjacent}(s')$ for each $s'\in O_{i+1}$, and putting all the resulting segments in $L$. Note that we may filter out any duplicates in $L$ using an additional static boolean table with entries $[1,n]$ where we mark lower endpoints of segments already added to $L$; and the next attempt of insertion of an already inserted segment is ignored. By construction, we have $|L|\leq |C_i|+2|O_{i+1}|$. Having constructed $L$, we perform the following update from $\strips_i$ to $\strips_{i+1}$. First, we invoke $\mathtt{Split}(s)$ for each $s\in C_i$, deleting it from the structure. Next, we invoke $\mathtt{Merge}(s)$ for each $s\in O_{i+1}$, adding it to the structure.
	
	Finally, we compute $B_i=\strips_i\setminus \strips_{i+1}$ by iterating over all $s\in L$ and verifying, using the method $\mathtt{Containing}(s)$, whether $s$ still belongs to $\strips_{i+1}$. The set $B_i$ comprises of all the segments $s\in L$ that fail this test.
	
	\begin{figure}[h]
		\centering
		\newcommand{\h}{\cellcolor{red!22!white}}
		\newcommand{\z}{\cellcolor{gray!22!white}}
		\newcommand{\g}{\cellcolor{blue!42!white}}
		\newcommand{\y}{\cellcolor{yellow!62!white}}
		\newcommand{\q}{\cellcolor{green!32!white}}
		\newcommand{\e}{\cellcolor{black!32!white}}
		\renewcommand{\arraystretch}{1.2}
		\centering
		\begin{tabular}{|M{1em}|M{1em}|M{1em}|M{1em}|}
			\hline
			0 & 0 & 0 & \e 1 \\
			\hline
			\h 1 & \h 1 & \h 1 & 0 \\
			\hline
			0 & \y 1 & \y 1 & \y 1 \\
			\hline
			\g 1 & \g 1 & \g 1 & \g 1 \\
			\hline
		\end{tabular}
	
		\vspace{3mm}	
		\small
		\begin{tabular}{|c|c|c|c|c|c|}
			\hline
			  & $O_i$ & $C_i$ & $\strips$ in $\mathbb{S}_i$  & $L_i$ & $B_i$  \\
			\hline
			$i=1$ & $\{[2,2], [4,4]\}$ & $\varnothing$ & $\{[2,2], [4,4]\}$ &$\{[2,2], [4,4]\}$ & $\{[2,2], [4,4]\}$  \\
			\hline
			$i=2$ & $\{[3,3]\}$ & $\varnothing$ & $\{[2,4]\}$ & $\varnothing$ & $\varnothing$ \\
			\hline
			$i=3$ & $\varnothing$ & $\{[2,2]\}$ & $\{[2,4]\}$ & $\{[1,1], [2,2]\}$ & $\{[2,4]\}$ \\
			\hline
			$i=4$ &  $\{[1,1]\}$ &  $\{[1,1], [3,3], [4,4]\}$ &  $\{[1,1], [3,4]\}$ & $\{[1,1], [3,3], [4,4]\}$ & $\{[1,1], [3,4]\}$ \\	
			\hline	
		\end{tabular}
		
		\let\h\undefined
		\let\g\undefined
		\let\z\undefined
		\let\y\undefined
		\let\e\undefined
		\let\q\undefined
		\caption{Illustration for the first part of the algorithm for computing the canonical slab decomposition. The matrix is given to us by a collection of slabs (marked with distinct colors): $\mathcal{K}=\{(2,2,1,3),(3,3, 2,4), (4,4, 1,4), (1,1, 4,4)\}$. In the table we use the segment notation, proposed at the beginning of the proof. The set $L$ computed in the $i$-th round is denoted here as $L_i$ and $\mathbb{S}_i$ stands for the set of segments belonging to $\mathbb{S}$ at the beginning of the $i$-th round -- by our invariant this set is equal to $\strips_i$.}
		\label{fig:computing_B_i}
	\end{figure}	
	
	Observe that the $i$-th iteration takes time $\Oh((|C_i|+|O_{i+1}|)\log \log n)$, hence the whole procedure described above runs in total time $\Oh(n+|{\cal K}|\log \log n)$.
	
	By performing a~symmetric sweep from right to left, we may compute the sets $A_i=\strips_i\setminus \strips_{i-1}$ for all $i\in [1,n]$ within the same time complexity. Note that we have $\sum_{i=1}^n (|A_i|+|B_i|)\leq 3\sum_{i=1}^n (|C_i|+|O_i|)\leq \Oh(|\cal K|)$.
	
	Having computed the sets $A_i$ and $B_i$ for each $i\in [1,n]$, we perform the final sweep that computes the canonical slab decomposition $\cal R$. To this end, we create an auxiliary array $\mathsf{slab\_start}[1..n]$ and initialize it with $\perp$ values. For each canonical slab $(a, b, c, d)$ we have that $[a, b] \in A_c, [a, b] \in B_d$ and there does not exist a pair $(i, j)$ such that $i \in [c+1, d]$ and $[a, j] \in A_i$ or a pair $(i, j)$ such that $i \in [c, d-1]$ and $[a, j] \in B_i$.
	
	We sweep our matrix from left to right, that is, we iterate with $i$ from $1$ to $n$ and for each value of $i$ we do the following.
	First, for each $(a, b) \in A_i$ we record $\mathsf{slab\_start}[a] = i$. Then, for each $(a, b) \in B_i$ we create a canonical slab $(a, b, \mathsf{slab\_start}[a], i)$. While technically not required, for consistency we may also set $\mathsf{slab\_start}[a] = \perp$ to denote that the canonical slab with one of its corners at $(a, \mathsf{slab\_start}[a])$ was already completed.

	
	\begin{figure}[h]
		\centering
			\begin{tabular}{|c|c|c|c|c|}
			\hline
			& $A_i$ & $B_i$ & $\mathsf{slab\_start}_i$ & slabs added in $i$-th round \\
			\hline
			$i=1$ & $\{[2,2], [4,4]\}$  & $\{[2,2], [4,4]\}$ & $[\perp, 1, \perp, 1]$ & $(2,2,1,1),\ (4,4,1,1)$\\
			\hline
			$i=2$ & $\{[2,4]\}$ &  $\varnothing$ & $[\perp, 2, \perp, \perp]$ & \\
			\hline
			$i=3$ & $\varnothing$ & $\{[2,4]\}$ & $[\perp, 2, \perp, \perp]$ & $(2,4, 2,3)$\\
			\hline
			$i=4$ & $\{[1,1], [3,4]\}$ & $\{[1,1], [3,4]\}$ & $[4, \perp, 4, \perp]$ & $(1,1,4,4),\ (3,4, 4,4)$\\
			\hline
		\end{tabular}
	
		\vspace{0.3cm}
		\centering
		\newcommand{\h}{\cellcolor{red!22!white}}
		\newcommand{\z}{\cellcolor{gray!22!white}}
		\newcommand{\g}{\cellcolor{blue!42!white}}
		\newcommand{\y}{\cellcolor{yellow!62!white}}
		\newcommand{\q}{\cellcolor{green!32!white}}
		\newcommand{\e}{\cellcolor{black!32!white}}
		\renewcommand{\arraystretch}{1.2}
		\centering
		\begin{tabular}{|M{1em}|M{1em}|M{1em}|M{1em}|}
			\hline
			0 & 0 & 0 & \e 1 \\
			\hline
			\h 1 & \y 1 & \y 1 & 0 \\
			\hline
			0 & \y 1 & \y 1 & \q 1 \\
			\hline
			\g 1 & \y 1 & \y 1 & \q 1 \\
			\hline
		\end{tabular}

		\let\h\undefined
		\let\g\undefined
		\let\z\undefined
		\let\y\undefined
		\let\e\undefined
		\let\q\undefined
		\caption{The second part of the algorithm for computing the canonical slab decomposition. The sets are computed for the matrix provided in Figure~\ref{fig:computing_B_i}. $\mathsf{slab\_start}_i$ denotes the state of the array $\mathsf{slab\_start}$ after processing $A_i$, but before processing $B_i$. The matrix presents the canonical slab decomposition -- each slab is marked with a distinct color.}
		\label{fig:canonical_slab_decomposition_procedure}
	\end{figure}
	
	It is straightforward to see that the set $\cal R$ computed at the end of the iteration is exactly the canonical slab decomposition of $M$. Also, the final sweep constructing $\cal R$ runs in time $\Oh(n+|{\cal K}|)$.
	\qed
\end{proof}

By combining Lemma~\ref{lem:canonical-small} and Theorem~\ref{thm:canonical}, we arrive at the following corollary.

\begin{corollary}\label{cor:canonical}
	If $M$ is a~$d$-twin-ordered $n\times n$ binary matrix and $\cal K$ is a~slab decomposition of $M$, then the algorithm $\mathtt{Decompose}(n,\mathcal{K})$ computes in time  $\Oh(n+|\mathcal{K}|\log \log n)$ the~canonical slab decomposition of $M$ with
	$4 f_d (n+2)+4n \in \Oh_d(n)$ slabs.
\end{corollary}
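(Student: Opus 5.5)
This corollary is a direct combination of two results already established, so the plan is to run the algorithm of Theorem~\ref{thm:canonical} and bound the size of its output with Lemma~\ref{lem:canonical-small}. The only hypothesis on the input needed for correctness and running time is that $\mathcal{K}$ is a slab decomposition of $M$. Twin-orderedness plays no role in the algorithm itself; it is used only to bound the output size.

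First, I would invoke Theorem~\ref{thm:canonical} on $(n,\mathcal{K})$. It guarantees that $\mathtt{Decompose}(n,\mathcal{K})$ returns exactly the canonical slab decomposition $\mathcal{R}$ of $M$, within time $\Oh(n+|\mathcal{K}|\log\log n)$. This bound does not depend on $d$ and does not depend on the size of the output. Second, since $M$ is $d$-twin-ordered, Lemma~\ref{lem:canonical-small} applies to $\mathcal{R}$ and gives $|\mathcal{R}|\leq 4f_d(n+2)+4n$. Because $f_d$ depends only on $d$, this quantity is in $\Oh_d(n)$.

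The only point that needs care is consistency between the two bounds. The running time in Theorem~\ref{thm:canonical} already accounts for writing out $\mathcal{R}$, since its proof shows $|\mathcal{R}|\leq\sum_i |B_i| = \Oh(|\mathcal{K}|)$. So no additional output-size term enters the time bound. I expect no real obstacle; the statement follows by putting the two results together.
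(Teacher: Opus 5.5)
Your proposal is correct and is the same argument the paper uses: run $\mathtt{Decompose}(n,\mathcal{K})$, relying on Theorem~\ref{thm:canonical} for correctness and the $\Oh(n+|\mathcal{K}|\log\log n)$ running time, and then apply Lemma~\ref{lem:canonical-small} to bound the number of canonical slabs by $4f_d(n+2)+4n$; this lemma is the only place where $d$-twin-orderedness is needed. Your extra remark that writing out the output is already covered by the time bound, because $|\mathcal{R}|=\sum_i|B_i|=\Oh(|\mathcal{K}|)$, agrees with the paper's analysis of the final sweep.
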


\section{An amortized data structure}
\label{sec:amortized}

In this section we present a~weaker variant of the data structure promised in Theorem~\ref{thm:main}, where we allow amortized running time of the updates. We will de-amortize this data structure using standard methods in Section~\ref{sec:wcase}.

\begin{theorem}\label{thm:amortized}
There is a~data structure $\mathbb{A}$ that maintains a
$d$-twin-ordered $n\times n$ binary matrix $M$
by supporting the following operations:
\begin{enumerate}
 \item $\mathtt{Init}(n, \cal K)$: Initialize the data structure
with a~slab decomposition $\cal K$ of $M$ in
$\Oh((n + |{\cal K}|) \log \log n)$ expected time.
 \item $\mathtt{QueryBit}(i,j)$: Return the entry $M[i, j]$ in
 $\Oh(\log \log n)$ expected worst case time.
 \item $\mathtt{Update}(i, j)$: Flip the entry $M[i, j]$ in
 $\Oh(\log \log n)$ expected amortized time.
\end{enumerate}
The data structure $\mathbb{A}$ uses $\Oh_d(n)$ space.
\end{theorem}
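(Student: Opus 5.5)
We plan a standard static-plus-buffer structure that is rebuilt periodically. The main component is a static 2D orthogonal point location structure $\mathbb{P}$ (Theorem~\ref{thm:Chan}) built on the canonical slab decomposition $\mathcal{R}$ of the matrix $M_0$ as it was at the last rebuild. Next to it we keep a dictionary $\mathbb{H}$ of the cells flipped since that rebuild. We implement $\mathbb{H}$ as a hash table with $\Oh(1)$ expected time per operation; this is where the ``expected'' comes from, and it keeps the space proportional to the number of stored cells rather than $n^2$.

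The operations are as follows.
\begin{itemize}
\item $\mathtt{QueryBit}(i,j)$ asks $\mathbb{P}$ whether $(i,j)$ lies in a slab of $\mathcal{R}$, which gives $M_0[i,j]$ in $\Oh(\log\log n)$ time. It then looks up $(i,j)$ in $\mathbb{H}$ in $\Oh(1)$ expected time and flips the answer if the cell is present.
\item $\mathtt{Update}(i,j)$ toggles the membership of $(i,j)$ in $\mathbb{H}$.
\item $\mathtt{Init}$ runs $\mathtt{Decompose}(n,\mathcal{K})$ from Corollary~\ref{cor:canonical}, obtaining $|\mathcal{R}|\le 4f_d(n+2)+4n$, and then builds $\mathbb{P}$ in $\Oh(|\mathcal{R}|\log\log n)$ time. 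Since $|\mathcal{R}|\in\Oh_d(n)$, this fits within $\Oh((n+|\mathcal{K}|)\log\log n)$, with the factor $f_d$ absorbed into the $\Oh_d$ memory bound. The paper's time bound has no visible $d$, so I would verify that the construction time is $\Oh(n+|\mathcal K|\log\log n)$ plus the time to build $\mathbb{P}$ on $\mathcal R$.
\end{itemize}

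Once $|\mathbb{H}|$ reaches $n$, we rebuild. We form a slab decomposition of the current $M$ and call $\mathtt{Init}$ on it. The key observation is that a single-cell flip changes the canonical decomposition only locally. For the rebuild, however, it is enough to produce \emph{some} slab decomposition of size $\Oh(|\mathcal{R}|+|\mathbb{H}|)$ and pass it to $\mathtt{Decompose}$, which canonicalizes it. We get this decomposition by starting from $\mathcal{R}$ and, for every flipped cell $(i,j)$, cutting out the containing slab if the cell became $0$, or adding the $1\times1$ slab $(i,i,j,j)$ if it became $1$. Cutting a single cell out of a slab leaves at most four slabs, namely the parts above, below, left and right of it within the slab. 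When several flipped cells fall into the same slab, we process the slab's cells one by one, always cutting from the current piece. Each cut adds at most $3$ slabs, so the result has $\Oh(|\mathcal{R}|+|\mathbb{H}|)=\Oh_d(n)$ slabs. Locating the current piece containing a cell needs care, since pieces are created dynamically. A simple option is to group the flipped cells by their original slab, found with $\mathbb{P}$, and to perform a recursive guillotine decomposition inside each slab. This uses time near-linear in the number of cells of that slab, up to sorting, which costs $\Oh(k\log\log n)$ with van Emde Boas trees.

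By Corollary~\ref{cor:canonical} applied to the current matrix, which is $d$-twin-ordered by assumption, the rebuild then costs $\Oh_d(n\log\log n)$, and the new $\mathcal{R}$ again has size $\Oh_d(n)$. Charging this cost to the $n$ updates since the previous rebuild gives $\Oh(\log\log n)$ expected amortized time per update. Here again I must check that the constants $f_d$ appear only in space and in initialization, as the statement requires. The time to rebuild $\mathbb{P}$ scales with $|\mathcal{R}|\in\Oh_d(n)$, so an honest accounting yields $\Oh_d(\log\log n)$ amortized. To remove the dependence on $d$, rebuild only after $|\mathcal{R}|$ updates instead of $n$. The size of $\mathbb{H}$ is then still $\Oh_d(n)$, so the space bound remains $\Oh_d(n)$ and the amortized cost is $\Oh(\log\log n)$.

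I expect the main obstacle to be this rebuild step: converting $\mathcal{R}$ plus $\mathbb{H}$ into a valid disjoint slab decomposition of size $\Oh(|\mathcal{R}|+|\mathbb{H}|)$ within $\Oh((|\mathcal{R}|+|\mathbb{H}|)\log\log n)$ time, when many flips hit the same slab, including cells flipped to $1$ and then back. I would handle it with the per-slab guillotine splitting described above.
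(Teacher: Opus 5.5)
Your architecture matches the paper's. You build Chan's structure $\mathbb{P}$ on the canonical decomposition and keep a hash-table buffer of flipped cells. You rebuild periodically by cutting the $0$-cells out of the old slabs, adding $1\times 1$ slabs for the $1$-cells, and canonicalizing with $\mathtt{Decompose}$. Toggling membership instead of storing values is a harmless variant.

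The step that fails is the rebuild threshold, which is exactly where the $d$-independence of the amortized bound is decided. With threshold $n$ you get $\Oh(f_d\log\log n)$, as you observe. Your fix, rebuilding after $|\mathcal{R}|$ updates, breaks when the canonical decomposition is small. A rebuild through $\mathtt{Decompose}$ costs $\Theta(n)$ no matter how few slabs there are: it sweeps all $n$ columns and initializes the adhesive segment set and the array $\mathsf{slab\_start}$ over $[1,n]$. Starting from the all-zero matrix and flipping one cell back and forth keeps $|\mathcal{R}|\le 1$, so you would pay $\Theta(n)$ per update. Moreover, with threshold $|\mathcal{R}_{\mathrm{old}}|$ you would also need $|\mathcal{R}_{\mathrm{new}}|=\Oh(|\mathcal{R}_{\mathrm{old}}|+|\mathbb{H}|)$ to pay for rebuilding $\mathbb{P}$. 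You only invoke the global bound $4f_d(n+2)+4n$, which does not give this.

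The paper fixes the threshold at $8f_d(n+2)$, which dominates both $n$ and every possible $|\mathcal{R}|\le 4f_d(n+2)+4n$. The intermediate decomposition then has $|\mathcal{P}|+2|\mathcal{Q}|=\Oh(f_d n)$ slabs, and one rebuild costs $\Oh(f_d(n+2)\log\log n)$. Spread over the $8f_d(n+2)$ updates of a phase, this gives a $d$-free $\Oh(\log\log n)$, while the buffer still occupies $\Oh_d(n)$ space. A threshold of $\max(n,|\mathcal{R}|)$ would also work, once you extract from the proof of Theorem~\ref{thm:canonical} that $|\mathcal{R}|=\sum_i|B_i|\le\sum_i(|C_i|+2|O_{i+1}|)\le 3|\mathcal{K}|$.

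That inequality also settles the $\mathtt{Init}$ bound you left open: building $\mathbb{P}$ on $\mathcal{R}$ costs $\Oh(|\mathcal{K}|\log\log n)$ with no dependence on $d$. By contrast, ``absorbing $f_d$ into the memory bound'' is not an argument about time.

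Finally, do not sort per slab with separate van Emde Boas trees. Each costs $\Oh(n)$ to initialize, and lexicographic keys range over a universe of size $n^2$. Instead, radix-sort all flipped cells once in $\Oh(n+|\mathbb{H}|)$ time and distribute them to their slabs. The paper's row-by-row splitting of each slab is then a linear-time realization of your guillotine cutting.
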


\begin{proof}
The data structure $\mathbb{A}$ internally maintains:
\begin{enumerate}
 \item a~static 2D orthogonal point location data structure provided by Theorem~\ref{thm:Chan}, referred to as $\mathbb{P}$, storing at most
 $4 f_d(n+2)+4n \in \Oh_d(n)$ slabs with all coordinates in the range
 $\{ 1, \ldots, n \}$; and
 \item a~dictionary (implemented as a~hash table) $\mathbb{Q}$ where
 the keys are
 from universe
 $\{1, \ldots, n \}\times [1,n]$, the values are
 boolean,
 and the number of elements that will be added is bounded by
 $8 f_d(n+2) \in \Oh_d(n)$. We assume a~standard implementation
 of a~hash
 table with space linear in the number of added elements
 and constant expected query time.
\end{enumerate}

The intuition behind maintaining $\mathbb{P}$ and $\mathbb{Q}$
is the following: $\mathbb{P}$ stores a~slab decomposition of $M$
computed at some point in the past. When the next updates arrive,
instead of immediately recomputing the slab decomposition, we
store the updates in $\mathbb{Q}$ as long as the size of
$\mathbb{Q}$ is bounded by $8 f_d(n+2)$. Only when the size
of $\mathbb{Q}$ reaches this value, we recompute the up to date
slab
decomposition of $M$ and store it in $\mathbb{P}$. This allows
us to amortize for the recomputation cost.
Thus, we maintain the following invariant after each update:
If $(i, j) \in \mathbb{Q}$, then $M[i, j] = \mathbb{Q}[i, j]$.
Otherwise $M[i, j]=1$ if and only if $(i, j)$ belongs to one of the
slabs stored in $\mathbb{P}$.

\paragraph*{Initialization:} The method $\mathtt{Init}(n, \cal K)$ applies the following steps.
First, it initializes an empty dictionary $\mathbb{Q}$. Then, it
applies the algorithm $\mathtt{Decompose}(n,\mathcal{K})$ of Theorem~\ref{thm:canonical} to obtain the canonical
slab decomposition $\cal R$ of $M$. This runs in time
$\Oh(n+|\mathcal{K}| \log \log n)$, and by
Lemma~\ref{lem:canonical-small} we have $|\mathcal{R}| \leq 4 f_d(n+2)+4n \in \Oh_d(n)$.
We conclude by initializing
$\mathbb{P}$ with $\cal R$ in $\Oh_d(n \log \log n)$ time.
Thus, the total initialization time is
$\Oh((n + |{\cal K}|) \log \log n)$.

\paragraph*{Query:}
The method $\mathtt{QueryBit}(i,j)$ looks up $(i,j)$ in $\mathbb{Q}$
and upon a~successful search returns the value of $(i,j)$.
This takes $\Oh(1)$ worst case expected time.
If the search fails, a~point location query $(i,j)$ is issued to
the point location data structure $\mathbb{P}$. This takes
$\Oh(\log \log n)$ worst case time. If the point location
query is successful, we return $1$, otherwise we return $0$.

\paragraph*{Update:}
The method $\mathtt{Update}(i, j)$ looks up $(i,j)$ in
$\mathbb{Q}$. If it finds it, then it negates its value.
This takes constant time in expectation.
Otherwise it issues a~point location query $(i,j)$ to $\mathbb{P}$
($\Oh(\log \log n)$ worst case time)
and adds $(i,j)$ to $\mathbb{Q}$ with the negated result of the query
(constant time in expectation).
Finally,
if the size of $\mathbb{Q}$ reaches the bound of $8 f_d(n+2)$,
the method
recomputes $\mathbb{P}$ and sets $\mathbb{Q}$ to be empty for
accommodating the next $8 f_d(n+2)$ updates.
We now describe the process of recomputing $\mathbb{P}$ (see Figure~\ref{fig:amortized_procedure} for an illustration).

In essence, recomputing $\mathbb{P}$ boils down to finding
a slab decomposition $\cal K$ that describes the
matrix $M$ after all updates stored in $\mathbb{Q}$ have been
introduced. Once we have such a~decomposition $\cal K$ at hand,
we use the method $\mathtt{Decompose}(n,\cal K)$ to compute
the canonical decomposition $\mathcal{R}$ in time
$\Oh(n+|\mathcal{K}| \log \log n)$ by
Theorem~\ref{thm:canonical}. As $|{\cal R}|\leq 4 f_d(n+2)+4n \in \Oh_d(n)$ by Lemma~\ref{lem:canonical-small}, we may then re-initialize $\mathbb{P}$ with ${\cal R}$
in time $\Oh_d(n \log \log n)$, by Theorem~\ref{thm:Chan}.
So let us proceed to describing how to find the decomposition
$\cal K$ of small enough size, given the list of slabs $\mathcal{P}$
from $\mathbb{P}$ (that can be computed upon the last recomputation
of
$\mathbb{P}$ and stored since then until now) and the list of
updates $\cal Q$ from $\mathbb{Q}$, that can be extracted from
$\mathbb{Q}$ in time proportional to the size of $\mathbb{Q}$,
bounded by $8 f_d(n+2)$.

\begin{figure}
	\newcommand{\h}{\cellcolor{red!22!white}}
	\newcommand{\z}{\cellcolor{gray!22!white}}
	\newcommand{\g}{\cellcolor{blue!42!white}}
	\newcommand{\y}{\cellcolor{yellow!62!white}}
	\newcommand{\q}{\cellcolor{green!32!white}}
	\newcommand{\e}{\cellcolor{black!32!white}}
	\renewcommand{\arraystretch}{1.2}
	\centering
	\begin{tabular}{|M{1em}|M{1em}|M{1em}|M{1em}|M{1em}|}
		\hline
		0 & {\bf 0} & 0 & \e 1 & 0\\
		\hline
		0 & 0 & 0 & \e 1 & 0 \\
		\hline
		\g 1 & \g 1 & \y 1 & 0 & 0 \\
		\hline
		\g 1 & \g 1 & \y 1 & {\bf 0} & \q 1 \\
		\hline
		\g 1 & \g {\bf 1} & 0 & \h 1 & \q 1 \\
		\hline
	\end{tabular}
	\hspace{0.5cm}
	\begin{tabular}{|M{1em}|M{1em}|M{1em}|M{1em}|M{1em}|}
		\hline
		0 & \z 1 & 0 & \e 1 & 0\\
		\hline
		0 & 0 & 0 & \e 1 & 0 \\
		\hline
		\g 1 & \y 1 & \y 1 & 0 & 0 \\
		\hline
		\g 1 & \y 1 & \y 1 & \h 1 & \h 1 \\
		\hline
		\g 1 &  0 & 0 & \h 1 & \h 1 \\
		\hline
	\end{tabular}
	\let\h\undefined
	\let\g\undefined
	\let\z\undefined
	\let\y\undefined
	\let\e\undefined
	\let\q\undefined
	\caption{A collection of slabs in $\cal P$ and the updates in $\cal Q$ (marked with bold values). In the second figure, there is a canonical slab decomposition after the recomputation.}
	\label{fig:amortized_procedure}
\end{figure}

The updates in $\cal Q$ can be partitioned into $0$-updates, i.e.,
the entries $(i,j)$ in ${\cal Q}$ whose value is $0$, and
$1$-updates, i.e., the entries $(i,j)$ in $\cal Q$ whose value is $1$.
We first sort $\cal Q$ lexicographically in $\Oh(|{\cal Q}|)$ time.
Using $\mathbb{P}$, we assign the $0$-updates
to slabs of $\cal P$ that contain them.
As a~result, each slab $P \in \mathcal{P}$ is assigned a~list
$L_P$ of $0$-updates from $\cal Q$, each of which is contained in $P$,
sorted lexicographically.
Next, for each
slab $P \in \mathcal{P}$, we iterate
through $L_P$ successively splitting $P$ into smaller
slabs.
We will process $L_P$ in batches, each batch corresponding
to updates of $L_P$ with equal first coordinate.
Thus, let $L_i$ be the sorted list of
second coordinates of updates from $L_P$ with the first
coordinate $i$.
Note that $(i,k) \in P$ for all $k \in L_i$.
Let $P=(a,b,c,d)$.
For each $i \in \{ 1, \ldots, n \}$ with $L_i \neq \emptyset$,
we proceed as follows.

We first partition the slab $P$ into three slabs:
$P[<i] = (a,i-1,c,d)$,
$P[i]=(i,i,c,d)$, and
$P[>i]= (i+1,b,c,d)$. The slab $P[<i]$, if it is
non-empty, is directly added to $\mathcal{K}$ as there are
no updates within it that we need to process. We now need to
split $P[i]$ into smaller slabs according to $L_i$, and then we proceed with
processing
$P\coloneqq P[>i]$ in the next step of the iteration. What remains
is to explain how $P[i]$ is split with regard to $L_i$. This is
done in a~very simple fashion: if $L_i=\{ k_1,\ldots,k_\ell \}$ with $k_1\leq \ldots\leq k_\ell$, then
we split $P[i]$ into
$P_0[i]=(i,i,c,k_1-1), P_2[i]=(i,i,k_1+1,k_2-1),
\ldots P_{\ell}[i]=(i,i,k_\ell+1,d)$, and we add each non-empty
slab
among $P_0[i],P_1[i], \ldots, P_\ell[i]$ to $\mathcal{K}$.
This entire iteration step takes $\Oh(|L_i|)$ time and
creates $\Oh(|L_i|)$ new slabs that are added to $\cal K$.

After we are done with $0$-updates from $\mathcal{Q}$,
we still have to process the $1$-updates. For
every $1$-update in $\mathcal{Q}$, we first use
structure $\mathbb{P}$ to see whether this update is contained
in some slab stored in $\mathbb{P}$. If it is, we ignore it,
otherwise we add to $\cal K$ a~$1 \times 1$ slab with
coordinates of the update that is being processed.

The entire procedure adds to $\cal K$ at most
$|\mathcal{P}|+2|\mathcal{Q}| \leq 4 f_d(n+2)+4n + 16 f_d(n+2)$ slabs.
Thus, the procedure $\mathtt{Decompose}(n,\cal K)$
runs in $\Oh(( f_d (n+2))\log \log n)$ time, and $\mathbb{P}$
can be re-initialized in $\Oh(( f_d (n+2))\log \log n)$ time as well.
Amortized over $8 f_d(n+2)$ updates, this gives
an amortized expected time of $\Oh(\log \log n)$ per update operation.
\qed\end{proof}

\section{De-amortization}
\label{sec:wcase}

In this section we show how to de-amortize the data structure
of
Theorem~\ref{thm:amortized}. Precisely, we prove the following statement, which immediately implies Theorem~\ref{thm:main}.

\begin{theorem}\label{thm:wc}
There is a data structure $\mathbb{W}$, which maintains a
$d$-twin-ordered $n\times n$ binary matrix $M$
by supporting the following operations:
\begin{enumerate}
 \item $\mathtt{Init}(n, \cal K)$: Initialize the data structure
with a slab decomposition $\cal K$ of $M$ in
$\Oh((n + |{\cal K}|) \log \log n)$ time.
 \item $\mathtt{QueryBit}(i,j)$: Return the entry $M[i, j]$ in
 $\Oh(\log \log n)$ expected worst case time.
 \item $\mathtt{Update}(i, j)$: Flip the entry $M[i, j]$ in
 $\Oh(\log \log n)$ expected worst case time.
\end{enumerate}
The data structure $\mathbb{W}$ uses $\Oh_d(n)$ space.
\end{theorem}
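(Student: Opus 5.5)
We de-amortize the structure $\mathbb{A}$ of Theorem~\ref{thm:amortized} by the standard global-rebuilding technique: the expensive recomputation of $\mathbb{P}$ is spread over the updates of a whole phase instead of being done at once. Let $T\coloneqq 8f_d(n+2)$. Time is divided into phases of $T$ updates each. Throughout a phase, $\mathbb{W}$ keeps a current static point location structure $\mathbb{P}_{\mathrm{cur}}$ (Theorem~\ref{thm:Chan}) together with two hash dictionaries. The first, $\mathbb{Q}_{\mathrm{old}}$, holds the updates of the previous phase. The second, $\mathbb{Q}_{\mathrm{new}}$, holds the updates of the current phase. The invariant is that $\mathbb{P}_{\mathrm{cur}}$ describes $M$ as it was at the start of the previous phase, and that the entries of $\mathbb{Q}_{\mathrm{old}}$ and then $\mathbb{Q}_{\mathrm{new}}$ override it, with $\mathbb{Q}_{\mathrm{new}}$ taking precedence. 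Concretely, $\mathtt{QueryBit}(i,j)$ looks up $\mathbb{Q}_{\mathrm{new}}$, then $\mathbb{Q}_{\mathrm{old}}$, and finally queries $\mathbb{P}_{\mathrm{cur}}$. This costs $\Oh(1)$ expected plus $\Oh(\log\log n)$ worst case. $\mathtt{Update}(i,j)$ first computes the current bit by this query and then stores its negation in $\mathbb{Q}_{\mathrm{new}}$.

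In the background, during the current phase we build $\mathbb{P}_{\mathrm{next}}$, which represents $M$ at the start of the current phase. This is exactly the matrix described by $\mathbb{P}_{\mathrm{cur}}$ overridden by $\mathbb{Q}_{\mathrm{old}}$, a snapshot that no longer changes during the phase. The construction is the one from the proof of Theorem~\ref{thm:amortized}. We split the slabs of $\mathbb{P}_{\mathrm{cur}}$ according to the $0$-updates and add $1\times 1$ slabs for the $1$-updates, which yields a slab decomposition $\mathcal{K}$ of size $\Oh_d(n)$. We then run $\mathtt{Decompose}(n,\mathcal{K})$ (Theorem~\ref{thm:canonical}) and build $\mathbb{P}_{\mathrm{next}}$ via Theorem~\ref{thm:Chan}. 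The snapshot is $d$-twin-ordered, so Lemma~\ref{lem:canonical-small} bounds the result by $4f_d(n+2)+4n$ slabs. The total work is $W=\Oh_d(n\log\log n)$ deterministic steps, apart from one expected-$\Oh(1)$ hash operation per element of $\mathbb{Q}_{\mathrm{old}}$. We cut this computation into chunks of $\lceil W/T\rceil=\Oh(\log\log n)$ steps and execute one chunk with each update.

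At the end of the phase we rotate the structures: $\mathbb{P}_{\mathrm{cur}}\leftarrow\mathbb{P}_{\mathrm{next}}$, $\mathbb{Q}_{\mathrm{old}}\leftarrow\mathbb{Q}_{\mathrm{new}}$, and $\mathbb{Q}_{\mathrm{new}}\leftarrow$ a fresh empty dictionary. This is done by pointer swaps in $\Oh(1)$ time. The discarded old structures are freed lazily, again in $\Oh(1)$-sized pieces per step, or else simply by reusing preallocated memory arenas. Correctness follows because after the swap $\mathbb{P}_{\mathrm{cur}}$ describes $M$ at the start of the just-finished phase, and $\mathbb{Q}_{\mathrm{old}}$ holds precisely the updates of that phase. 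For $\mathtt{Init}$, we build $\mathbb{P}_{\mathrm{cur}}$ from the canonical decomposition of $\mathcal{K}$ as in Theorem~\ref{thm:amortized} and start with $\mathbb{Q}_{\mathrm{old}}=\mathbb{Q}_{\mathrm{new}}=\emptyset$. Space is $\Oh_d(n)$, since at any moment at most two point-location structures of $\Oh_d(n)$ slabs exist, and at most three dictionaries of size $\le T$.

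\textbf{Main obstacle.} The delicate point is making sure the per-update cost is truly worst case and not merely amortized. Three things must hold. First, every background sub-step must be interruptible with bounded work. The sweeps in $\mathtt{Decompose}$ consist of individual $\Oh(\log\log n)$ operations on adhesive segment sets; the only other work is linear scans and a sort of $\mathbb{Q}_{\mathrm{old}}$, which can be done by radix sort on an $\Oh(\log n)$-bit key in a constant number of passes. Chan's preprocessing is treated as a black box running in $\Oh(N\log\log U)$ steps, so it can also be sliced. Second, the $\Oh(n)$-time initializations of arrays and van Emde Boas trees must be performed inside the same budget. They can be charged to the $T=\Omega(n)$ updates of the phase, or avoided by reusing structures that are cleaned incrementally. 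Third, the background computation must read only frozen data, namely $\mathbb{P}_{\mathrm{cur}}$ and $\mathbb{Q}_{\mathrm{old}}$, so that concurrent updates never interfere with it. Our design guarantees this.
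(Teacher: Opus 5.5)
Your argument is correct, but it implements the global rebuilding differently from the paper.

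\emph{The paper's route.} It treats the amortized structure $\mathbb{A}$ of Theorem~\ref{thm:amortized} as a black box. A new instance $\mathbb{A}_i$ is initialized over one epoch from a slab decomposition extracted from the previous instance. It then catches up over the next epoch by replaying the buffered updates at double speed, and only then takes over. The extraction needs the old instance's internal $\mathbb{P}$ and $\mathbb{Q}$ to stay fixed while that instance keeps serving updates. For this reason the paper also keeps two copies $\mathbb{A}_{i,0},\mathbb{A}_{i,1}$: one live and one frozen snapshot.

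\emph{Your route.} You open up $\mathbb{A}$ and use a two-level buffer. The rebuild input, $\mathbb{P}_{\mathrm{cur}}$ together with $\mathbb{Q}_{\mathrm{old}}$, is frozen by construction. The updates arriving during the rebuild are already in $\mathbb{Q}_{\mathrm{new}}$, which simply becomes the next $\mathbb{Q}_{\mathrm{old}}$.

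\emph{What each approach buys.} Your design removes both the replay phase and the duplicated snapshot, at the cost of one extra hash lookup per query. It also sidesteps a detail the paper leaves implicit. Each instance there receives more than $L$ updates after its initialization (up to $4L$ for the live copy). So the internal rebuild of $\mathbb{A}$, which is triggered at $|\mathbb{Q}|=L$, would have to be switched off to keep the time bound worst case. You are also more explicit than the paper about three things: slicing Chan's preprocessing, resetting the $\Oh(n)$-size arrays and van Emde Boas trees, and recycling dictionaries. In turn, the paper's version is more modular, since it reuses Theorem~\ref{thm:amortized} unchanged.

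\emph{One precision point.} The statement demands $\lceil W/T\rceil=\Oh(\log\log n)$ with a constant independent of $d$. To get this, you should state $W=\Oh(f_d\, n\log\log n)$ with an absolute constant, not merely $W=\Oh_d(n\log\log n)$. The stronger bound does hold. The decomposition $\mathcal{K}$ has at most as many slabs as $\mathbb{P}_{\mathrm{cur}}$ stores plus $2|\mathbb{Q}_{\mathrm{old}}|$, which is $\Oh(f_d n)$. Likewise $|\mathcal{R}|\le 4f_d(n+2)+4n$.
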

\begin{proof}
We apply a standard de-amortization technique to the data structure
$\mathbb{A}$ of Theorem~\ref{thm:amortized}. For the simplicity of exposition, we firstly present a slightly flawed simpler version of the argument, and then we explain what was the issue and how to fix it.

Let $L=8 f_d(n+2)$ be the limit on the size of $\mathbb{Q}$ from Theorem~\ref{thm:amortized} and let us partition all updates into epochs of $L$ updates. Let us denote the $i$th epoch as $E_i$, with the first epoch being~$E_0$. For each valid $i \ge 1$, let us define the data structure $\mathbb{A}_i$ that will exist throughout epochs $E_{2i-2}, E_{2i-1}, E_{2i}, E_{2i+1}$ and the data structure $\mathbb{A}_0$ that will exist throughout epochs $E_0$ and $E_1$ (in particular, there will be only two different data structures existing at any point of time). The data structure $\mathbb{A}_0$ will be initialized with the $\mathtt{Init}(n, \cal K)$ call and responsible for answering queries of epochs $E_0$ and $E_1$. The data structure $\mathbb{A}_i$ for $i \ge 1$ will be initialized throughout epochs $E_{2i-2}$ and $E_{2i-1}$ and responsible for answering queries during epochs $E_{2i}$ and $E_{2i+1}$. At the start of epoch $E_{2i-2}$, we run $\mathtt{Init}(n, \cal K)$, where $\cal K$ is the slab decomposition obtained from the current state of $\mathbb{A}_{i-1}$, similarly as in the case of recomputation of the data structure in the proof of Theorem~\ref{thm:amortized}. We have that $|\mathcal{K}| = \Oh(L)$, so that call takes $\Oh(L \log \log n)$ total time. Hence, we can distribute this work over $L$ updates of epoch $E_{2i-2}$ adding $\Oh(\log \log n)$ worst case time overhead to each update. At the end of $E_{2i-2}$ the $\mathtt{Init}(n, \cal K)$ call will be finished, but $\mathbb{A}_i$ will consequently be $L$ updates behind the current state, as its state reflects $M$ before $E_{2i-2}$. During the epoch $E_{2i-1}$ the data structure $\mathbb{A}_i$ catches up by processing two updates when $\mathbb{A}_{i-1}$ processes one. That multiplies the work on each update only by an absolute constant. At the end of $E_{2i-1}$, $\mathbb{A}_i$ caught up to the current state and is ready to take over the role of being the data structure that actually answers incoming queries.

The above description was slightly flawed, as the $\mathtt{Init}(n, \cal K)$ call initializing $\mathbb{A}_{i}$ needs to know $\cal K$. However, it takes $\Oh(|\mathcal{K}|) = \Oh(L)$ time to create $\mathbb{A}_i$ and it requires uninterrupted access to the internal data of $\mathbb{A}_{i-1}$. However $\mathbb{A}_{i-1}$ needs to be answering new queries, from the epoch $E_{2i-2}$, so it cannot grant uninterrupted access to its internal $\mathbb{P}$ and $\mathbb{Q}$ for the sake of creating $\cal K$. The trick to mitigate this issue is to create two copies of each data structure --- $\mathbb{A}_{i, 0}$ and $\mathbb{A}_{i, 1}$. Up to the moment of starting $E_{2i}$ epoch, they are initialized in the same way as $\mathbb{A}_i$ was in the previous description. However, at this moment $\mathbb{A}_{i, 0}$ takes the responsibility of answering incoming updates (and consequently, requires changes to its internal data), while $\mathbb{A}_{i, 1}$ does not process any updates and serves the role of the snapshot of $\mathbb{A}_i$ after its full initialization, lending itself to $\mathbb{A}_{i+1}$ that required uninterrupted access to $\mathbb{A}_i$ for its initialization.
\qed
\end{proof}

\bibliographystyle{splncs04}
\bibliography{references}

@article{VANEMDEBOAS197780,
title = {Preserving order in a forest in less than logarithmic time and linear space},
journal = {Information Processing Letters},
volume = {6},
number = {3},
pages = {80-82},
year = {1977},
issn = {0020-0190},
doi = {10.1016/0020-0190(77)90031-X},
author = {P. {van Emde Boas}}
}

@article{Chan2DPL,
author = {Chan, Timothy M.},
title = {Persistent Predecessor Search and Orthogonal Point Location on the Word RAM},
year = {2013},
issue_date = {June 2013},
publisher = {Association for Computing Machinery},
address = {New York, NY, USA},
volume = {9},
number = {3},
issn = {1549-6325},
doi = {10.1145/2483699.2483702},
journal = {ACM Trans. Algorithms},
month = jun,
articleno = {22},
numpages = {22}
}

@inproceedings{tww5,
	author       = {{\'{E}}douard Bonnet and
	Ugo Giocanti and
	Patrice Ossona de Mendez and
	St{\'{e}}phan Thomass{\'{e}}},
	title        = {Twin-Width {V:} {L}inear Minors, Modular Counting, and Matrix Multiplication},
	booktitle    = {40th International Symposium on Theoretical Aspects of Computer Science,
	{STACS} 2023},
	series       = {LIPIcs},
	volume       = {254},
	pages        = {15:1--15:16},
	publisher    = {Schloss Dagstuhl --- Leibniz-Zentrum f{\"{u}}r Informatik},
	year         = {2023},
	doi          = {10.4230/LIPICS.STACS.2023.15},
	bibsource    = {dblp computer science bibliography, https://dblp.org}
}

@inproceedings{tww-distal,
	author       = {Wojciech Przybyszewski},
	title        = {Distal Combinatorial Tools for Graphs of Bounded Twin-Width},
	booktitle    = {38th Annual {ACM/IEEE} Symposium on Logic in Computer Science, {LICS}
	2023},
	pages        = {1--13},
	publisher    = {{IEEE}},
	year         = {2023},
	doi          = {10.1109/LICS56636.2023.10175719},
	bibsource    = {dblp computer science bibliography, https://dblp.org}
}

@article{tww-pchi,
	author = {Bourneuf, Romain and Thomass{\' e}, St{\' e}phan},
	journal = {Advances in Combinatorics},
	doi = {10.19086/aic.2025.2},
	year = {2025},
	number = {2},
	title = {Bounded twin-width graphs are polynomially $\chi$-bounded},
}

@article{tww-qchi,
	author       = {Michał Pilipczuk and
	Marek Sokołowski},
	title        = {Graphs of bounded twin-width are quasi-polynomially $\chi$-bounded},
	journal      = {J. Comb. Theory {B}},
	volume       = {161},
	pages        = {382--406},
	year         = {2023},
	doi          = {10.1016/J.JCTB.2023.02.006},
	bibsource    = {dblp computer science bibliography, https://dblp.org}
}

@inproceedings{tww8,
	author       = {{\'{E}}douard Bonnet and
	Dibyayan Chakraborty and
	Eun Jung Kim and
	Noleen K{\"{o}}hler and
	Raul Lopes and
	St{\'{e}}phan Thomass{\'{e}}},
	title        = {Twin-Width {VIII:} {D}elineation and Win-Wins},
	booktitle    = {17th International Symposium on Parameterized and Exact Computation,
	{IPEC} 2022},
	series       = {LIPIcs},
	volume       = {249},
	pages        = {9:1--9:18},
	publisher    = {Schloss Dagstuhl --- Leibniz-Zentrum f{\"{u}}r Informatik},
	year         = {2022},
	doi          = {10.4230/LIPICS.IPEC.2022.9},
	bibsource    = {dblp computer science bibliography, https://dblp.org}
}

@inproceedings{tww6,
	author       = {{\'{E}}douard Bonnet and
	Eun Jung Kim and
	Amadeus Reinald and
	St{\'{e}}phan Thomass{\'{e}}},
	title        = {Twin-width {VI:} the lens of contraction sequences},
	booktitle    = {2022 {ACM-SIAM} Symposium on Discrete Algorithms,
	{SODA} 2022},
	pages        = {1036--1056},
	publisher    = {{SIAM}},
	year         = {2022},
	doi          = {10.1137/1.9781611977073.45},
	bibsource    = {dblp computer science bibliography, https://dblp.org}
}

@article{tww7,
	author       = {{\'{E}}douard Bonnet and
	Colin Geniet and
	Romain Tessera and
	St{\'{e}}phan Thomass{\'{e}}},
	title        = {Twin-width {VII:} groups},
	journal      = {CoRR},
	volume       = {abs/2204.12330},
	year         = {2022},
	doi          = {10.48550/ARXIV.2204.12330},
	eprinttype    = {arXiv},
	eprint       = {2204.12330},
	bibsource    = {dblp computer science bibliography, https://dblp.org}
}

@inproceedings{tww-tournaments,
	author       = {Colin Geniet and
	St{\'{e}}phan Thomass{\'{e}}},
	title        = {First Order Logic and Twin-Width in Tournaments},
	booktitle    = {31st Annual European Symposium on Algorithms, {ESA} 2023},
	series       = {LIPIcs},
	volume       = {274},
	pages        = {53:1--53:14},
	publisher    = {Schloss Dagstuhl --- Leibniz-Zentrum f{\"{u}}r Informatik},
	year         = {2023},
	doi          = {10.4230/LIPICS.ESA.2023.53},
	bibsource    = {dblp computer science bibliography, https://dblp.org}
}

@inproceedings{tww-separable,
	author       = {{\'{E}}douard Bonnet and
	Romain Bourneuf and
	Colin Geniet and
	St{\'{e}}phan Thomass{\'{e}}},
	title        = {Factoring Pattern-Free Permutations into Separable ones},
	booktitle    = {2024 {ACM-SIAM} Symposium on Discrete Algorithms,
	{SODA} 2024},
	pages        = {752--779},
	publisher    = {{SIAM}},
	year         = {2024},
	doi          = {10.1137/1.9781611977912.30},
	bibsource    = {dblp computer science bibliography, https://dblp.org}
}

@inproceedings{tww-repr,
	author       = {Michał Pilipczuk and
	Marek Sokołowski and
	Anna Zych{-}Pawlewicz},
	title        = {Compact Representation for Matrices of Bounded Twin-Width},
	booktitle    = {39th International Symposium on Theoretical Aspects of Computer Science,
	{STACS} 2022},
	series       = {LIPIcs},
	volume       = {219},
	pages        = {52:1--52:14},
	publisher    = {Schloss Dagstuhl --- Leibniz-Zentrum f{\"{u}}r Informatik},
	year         = {2022},
	doi          = {10.4230/LIPICS.STACS.2022.52},
	bibsource    = {dblp computer science bibliography, https://dblp.org}
}

@inproceedings{tww-stable,
	author       = {Jakub Gajarsk{\'{y}} and
	Michał Pilipczuk and
	Szymon Toruńczyk},
	title        = {Stable graphs of bounded twin-width},
	booktitle    = {37th Annual {ACM/IEEE} Symposium on Logic in Computer
	Science, LICS 2022},
	pages        = {39:1--39:12},
	publisher    = {{ACM}},
	year         = {2022},
	doi          = {10.1145/3531130.3533356},
	bibsource    = {dblp computer science bibliography, https://dblp.org}
}

@inproceedings{tww-types,
	author       = {Jakub Gajarsk{\'{y}} and
	Michał Pilipczuk and
	Wojciech Przybyszewski and
	Szymon Toruńczyk},
	title        = {Twin-Width and Types},
	booktitle    = {49th International Colloquium on Automata, Languages, and Programming,
	{ICALP} 2022},
	series       = {LIPIcs},
	volume       = {229},
	pages        = {123:1--123:21},
	publisher    = {Schloss Dagstuhl - Leibniz-Zentrum f{\"{u}}r Informatik},
	year         = {2022},
	doi          = {10.4230/LIPICS.ICALP.2022.123},
	bibsource    = {dblp computer science bibliography, https://dblp.org}
}

@article{tww4,
	author       = {{\'{E}}douard Bonnet and
	Ugo Giocanti and
	Patrice {Ossona de Mendez} and
	Pierre Simon and
	St{\'{e}}phan Thomass{\'{e}} and
	Szymon Toruńczyk},
	title        = {Twin-Width {IV:} {O}rdered Graphs and Matrices},
	journal      = {J. {ACM}},
	volume       = {71},
	number       = {3},
	pages        = {21},
	year         = {2024},
	doi          = {10.1145/3651151},
	bibsource    = {dblp computer science bibliography, https://dblp.org}
}

@article{tww-kern,
	author       = {{\'{E}}douard Bonnet and
	Eun Jung Kim and
	Amadeus Reinald and
	St{\'{e}}phan Thomass{\'{e}} and
	R{\'{e}}mi Watrigant},
	title        = {Twin-width and Polynomial Kernels},
	journal      = {Algorithmica},
	volume       = {84},
	number       = {11},
	pages        = {3300--3337},
	year         = {2022},
	doi          = {10.1007/S00453-022-00965-5},
	bibsource    = {dblp computer science bibliography, https://dblp.org}
}

@article{tww-perm,
	author       = {{\'{E}}douard Bonnet and
	Jaroslav Ne\v{s}et\v{r}il and
	Patrice {Ossona de Mendez} and
	Sebastian Siebertz and
	St{\'{e}}phan Thomass{\'{e}}},
	title        = {Twin-width and permutations},
	journal      = {Log. Methods Comput. Sci.},
	volume       = {20},
	number       = {3},
	year         = {2024},
	doi          = {10.46298/LMCS-20(3:4)2024},
	bibsource    = {dblp computer science bibliography, https://dblp.org}
}

@article{tww2,
	author       = {{\'{E}}douard Bonnet and
	Colin Geniet and
	Eun Jung Kim and
	St{\'{e}}phan Thomass{\'{e}} and
	R{\'{e}}mi Watrigant},
	title        = {Twin-width {II:} small classes},
	journal      = {Comb. Theory},
	volume       = {2},
	number       = {2},
	year         = {2022},
	doi          = {10.5070/C62257876},
	bibsource    = {dblp computer science bibliography, https://dblp.org}
}

@article{tww1,
	author       = {{\'{E}}douard Bonnet and
	Eun Jung Kim and
	St{\'{e}}phan Thomass{\'{e}} and
	R{\'{e}}mi Watrigant},
	title        = {Twin-width {I:} {T}ractable {FO} Model Checking},
	journal      = {J. {ACM}},
	volume       = {69},
	number       = {1},
	pages        = {3:1--3:46},
	year         = {2022},
	doi          = {10.1145/3486655},
	bibsource    = {dblp computer science bibliography, https://dblp.org}
}

\end{document}